\begin{document}
\newtheorem{theorem}{Theorem}
\newtheorem{lemma}{Lemma}
\newtheorem{definition}{Definition}


\title{Faster Random Walks By Rewiring Online Social Networks On-The-Fly}


%
%
%
%

\author{
Zhuojie Zhou$^{1}$ \hspace{5mm}
Nan Zhang$^{1}$ \hspace{5mm}
Zhiguo Gong$^{2}$ \hspace{5mm}
Gautam Das$^{3}$ \\[3mm]
$^1$\affaddr{George Washington University}\\
$^2$\affaddr{University of Macau}\\
$^3$\affaddr{University of Texas at Arlington}}

\author{%
{Zhuojie Zhou{\small $^{1}$}, Nan Zhang{\small $^{2}$}, Zhiguo Gong{\small $^{3}$}, Gautam Das{\small $^{4}$}}%
\vspace{1.6mm}\\
\fontsize{10}{10}\selectfont\itshape
$~^{1,2}$Computer Science Department, George Washington University\\
\fontsize{9}{9}\selectfont\ttfamily\upshape
$~^{1}$rexzhou@gwu.edu\\
$~^{2}$nzhang10@gwu.edu%
\vspace{1.2mm}\\
\fontsize{10}{10}\selectfont\rmfamily\itshape
$~^{3}$Computer and Information Science Department, University of Macau\\
\fontsize{9}{9}\selectfont\ttfamily\upshape
$~^{2}$fstzgg@umac.mo
\vspace{1.2mm}\\
\fontsize{10}{10}\selectfont\rmfamily\itshape
$~^{4}$Computer Science Department, University of Texas at Arlington\\
\fontsize{9}{9}\selectfont\ttfamily\upshape
$~^{4}$gdas@uta.edu
}

\maketitle

\begin{abstract}
Many online social networks feature restrictive web interfaces which only allow the query of a user's local neighborhood through the interface. To enable analytics over such an online social network through its restrictive web interface, many recent efforts reuse the existing Markov Chain Monte Carlo methods such as random walks to sample the social network and support analytics based on the samples. The problem with such an approach, however, is the large amount of queries often required (i.e., a long ``mixing time'') for a random walk to reach a desired (stationary) sampling distribution.

In this paper, we consider a novel problem of enabling a faster random walk over online social networks by ``rewiring'' the social network on-the-fly. Specifically, we develop Modified TOpology (MTO)-Sampler which, by using only information exposed by the restrictive web interface, constructs a ``virtual'' overlay topology of the social network while performing a random walk, and ensures that the random walk follows the modified overlay topology rather than the original one. We show that MTO-Sampler not only provably enhances the efficiency of sampling, but also achieves significant savings on query cost over real-world online social networks such as Google Plus, Epinion etc.
\end{abstract}


%
%

\section{Introduction}

\subsection{Aggregate Estimation over Online Social Networks}

An online social network allows its users to publish contents and form connections with other users. To retrieve information from a social network, one generally needs to issue a {\em individual-user query} through the social network's web interface by specifying a user of interest, and the web interface returns the contents published by the user as well as a list of other users connected with the user\footnote{We currently focus on the undirected relationship between users.}. 

An online social network not only provides a platform for users to share information with their acquaintance, but also enables a third party to perform a wide variety of analytical applications over the social network - e.g., the analysis of rumor/news propagation, the mining of sentiment/opinion on certain subjects, and social media based market research. While some third parties, e.g., advertisers, may be able to negotiate contracts with the network owners to get access to the full underlying database, many third parties lack the resources to do so.  To enable these third-party analytical applications, one must be able to accurately estimate big-picture aggregates (e.g., the average age of users, the COUNT of user posts that contain a given word) over an online social network by issuing a small number of individual-user queries through the social network's web interface. We address this problem of third-party aggregate estimation in the paper.

\subsection{Existing Sampling Based Solutions and Their Problems}

An important challenge facing third-party aggregate estimation is the lack of cooperation from online social network providers. In particular, the information returned by each individual-user query is extremely limited - only containing information about the neighborhood of one user.  Furthermore, almost all large-scale online social networks enforce limits on the number of web requests one can issue (e.g., 600 open graph queries per 600 seconds for Facebook\footnote{https://developers.facebook.com/docs/best-practices/}, and 350 requests per hour for Twitter\footnote{https://dev.twitter.com/docs/rate-limiting}). As a result, it is practically impossible to crawl/download most or all data from an online social network before generating aggregate estimations.  There is also no available way for a third party to obtain the entire topology of the graph underlying the social network.

To address this challenge, a number of sampling techniques have been proposed for performing analytics over an online social network without the prerequisite of crawling \cite{Leskovec2006a, Jin, Katzir2011, Gjoka2010}. The objective of sampling is to randomly select elements (e.g., nodes/users or edges/relationships) from the online social network according to a pre-determined probability distribution, and then to generate aggregate estimations based on the retrieved samples. Since only individual local neighborhoods (i.e., a user and the set of its neighbors) - rather than the entire graph topology - can be retrieved from the social network's web interface, to the best of our knowledge, all existing sampling techniques without prior knowledge of all nodes/edges are built upon the idea of performing {\em random walks} over the graph which only require knowledge of the local neighborhoods visited by the random walks.

In literature, there are two popular random walk schemes: simple random walk and Metropolis Hastings random walk. {\em Simple random walk} (SRW) \cite{Lovasz1993} starts from an arbitrary user, repeatedly hops from one user to another by choosing uniformly at random from the former user's neighborhood, and stops after a number of steps to retrieve the last user as a sample. When the simple random walk is sufficiently long, the probability for each user to be sampled tends to reach a {\em stationary (probability) distribution} proportional to each user's degree (i.e., the number of users connected with the user). Thus, based on the retrieved samples and knowledge of such a stationary distribution, one can generate unbiased estimations of AVG aggregates (with or without selection conditions) over all users in the social network. If the total number of users in the social network is available\footnote{Which is the case for many real-world social networks whose providers publish the total number of users for advertising purposes.}, then COUNT and SUM aggregates can be answered without bias as well.

{\em Metropolis Hastings random walk} (MHRW) is a random walk achieving any distribution (typically uniform distribution) constructed by the famous MH algorithm. 
As an extension of MHRW, based on the knowledge of all the ids of a graph, 
\cite{Jin} suggests that we can conduct {\em random jump} (RJ), which jumps to any random vertex\footnote{It may need the global topology or the whole user id space for generate random vertex, thus not viable for all online social networks.} in the graph with a fixed probability in each step when it carries on the MHRW. Although MHRW can yield asymptotically uniform samples, which requires no additional processing for subsequent analysis, it is slower than SRW almost for all practical measurements of convergence, such as degree distribution distance, KS distance and mean degree error. According to \cite{Gjoka2010} and \cite{Lee2012}, SRW is 1.5-8 times faster than MHRW. Thus we set the baseline as SRW, while we also include MHRW in the experimental section.


A critical problem of existing sampling techniques, however, is the large number of individual-user queries (i.e., web requests) they require for retrieving each sample. Consider the above-described simple random walk as an example. In order to reach the stationary distribution (and thereby an accurate aggregate estimation), one may have to issue a large number of queries as a ``burn-in'' period of the random walk. Traditional studies on graph theory found that the length of such a burn-in period is determined by the graph {\em conductance} - an intrinsic property of the graph topology (formally defined in Section~\ref{preliminaries}). In particular, the smaller the conductance is, the longer the burn-in period will be (i.e., the more individual-user queries will be required by sampling).

Unfortunately, a recent study \cite{Mohaisena} on real-world social networks such as Facebook, Livejournal, etc.~found the conductance of their graphs to be substantially lower than expected. As a result, a random walk on these social networks often requires a large number of individual-user queries - e.g., approximately 500 to 1500 single random walk length for a real-world social network Livejournal of one million nodes to achieve acceptable variance distance \cite{Mohaisena}. One can see that, in order to retrieve enough samples to reach an accurate aggregate estimation, the existing sampling techniques may require a very large number of individual-user queries.

\subsection{Outline of Technical Results}

In this paper, we consider a novel problem of how to significantly increase the conductance of a social network graph by modifying the graph topology on-the-fly (during the third-party random walk process).  In the following, we shall first explain what we mean by on-the-fly topology modification, and then describe the rationale behind our main ideas for topology modification.

First, by topology modification we do {\em not} actually modify the original topology of the social network graph - indeed, no third party other than the social network provider has the ability to do so. What we modify is the topology of an {\em overlay graph} on which we perform the random walks. Fig~\ref{doublelayer} depicts an example: if we can decide that not considering a particular edge in the random walk process can make the burn-in period shorter (i.e., increase the conductance), then we are essentially performing random walks over an overlay graph on which this edge is removed. By doing so, we can achieve same accurate aggregate estimation with lower query cost.
One can see that, with traditional random walk techniques, the overlay graph is exactly the same as the original social network graph. Our objective here is to manipulate edges in the overlay graph so as to maximize the graph conductance.

It is important to note that the technical challenge here is not how edge manipulations can boost graph conductance - a simple method to reach theoretical maximum on conductance is to repeatedly insert edges to the graph until it becomes a complete graph. This requires the knowledge of all nodes in the social network, which a third-party does not have. The key challenge here is how to perform edge manipulations only based on the knowledge of local neighborhoods that a random walk has passed by, and yet increases the conductance of the entire graph in a significant manner. In the following, we provide an intuitive explanation of our main ideas for topology modification.

To understand the main ideas, we first introduce the concepts of {\em cross-cutting} and {\em non-cross-cutting} edges intuitively with an example in Fig~\ref{doublelayer} (we shall formally define these concepts in Section~\ref{preliminaries}). Generally speaking, if we consider a social network graph consisting of multiple densely connected components (e.g., $S$ and $\bar{S}$ in Fig~\ref{doublelayer}), then the edges connecting them are likely to be cross-cutting edges, while edges inside each densely connected component are likely non-cross-cutting ones. A key intuition here is that the more cross-cutting edges and/or the fewer non-cross-cutting edges a graph has, the higher its conductance is. For example, Graph $G$ in Fig~\ref{doublelayer} has a low conductance (i.e., high burn-in period) as a random walk is likely to get ``stucked'' in one of the two dense components which are difficult to escape, given that there is only one cross-cutting edge $(u, v)$. On the other hand, with far fewer non-cross-cutting edges and a few additional cross-cutting edges, $G^*$ has a much higher conductance as it is much easier now for a random walk to move from one component to the other.

\begin{figure}
    \centering
    \includegraphics[width=3.6in]{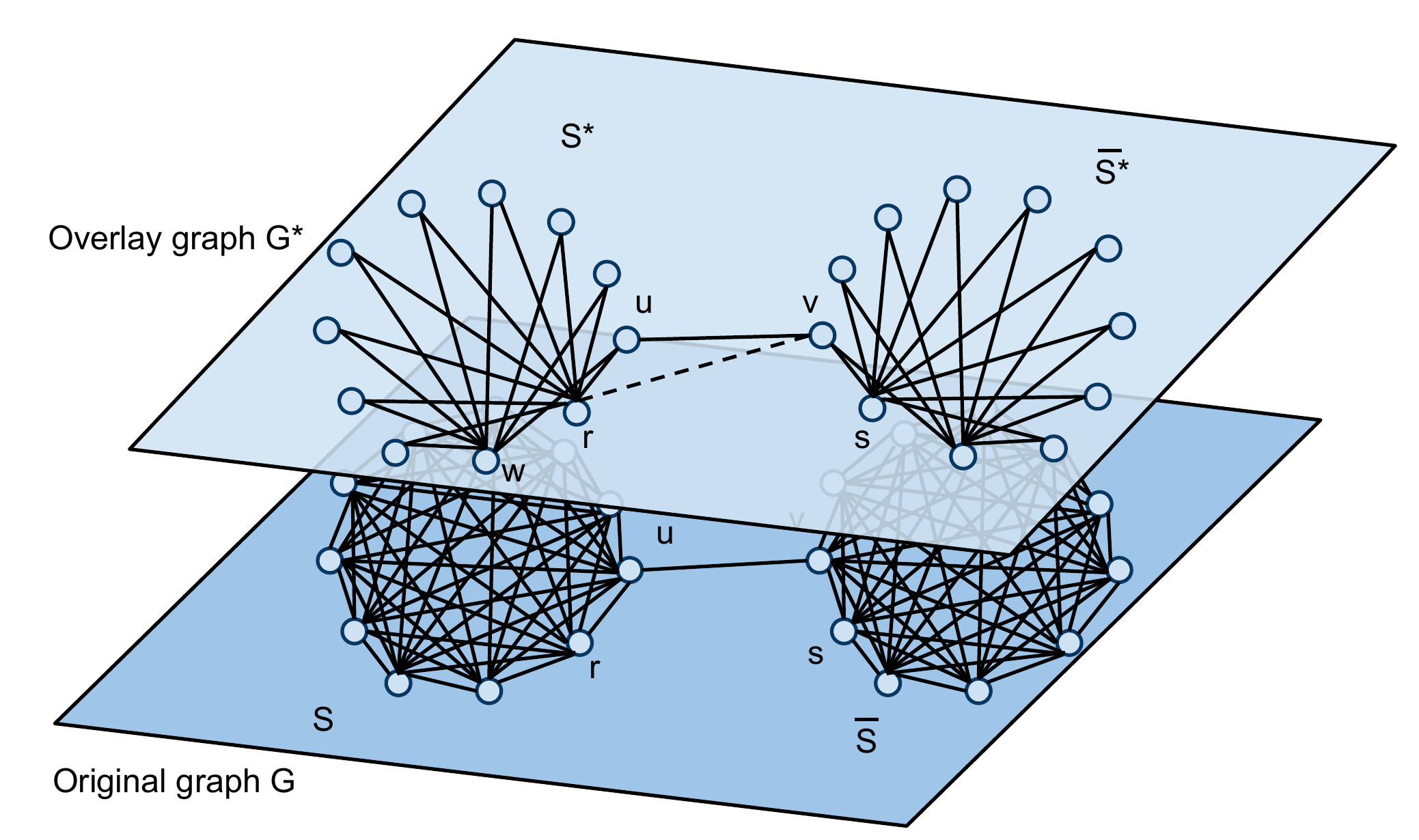}
    \caption{A concept of a random walk on the topologically modified overlay graph.}
    \label{doublelayer}
\end{figure}

With the concepts of cross-cutting and non-cross-cutting edges, we develop Modify TOpology Sampler (MTO-Sampler), a topology manipulation technique which first determines\footnote{Note that, as we shall prove in section \ref{sec:identification_challenge}, it is impossible to assert deterministically that an edge is cross-cutting. Nonetheless, it is possible to assert deterministically that an edge is non-cross-cutting. Thus, our algorithm has two possible outputs: non-cross-cutting or uncertain. We shall show in the paper that it outputs non-cross-cutting for a large number of (non-cross-cutting) edges in real-world social networks.} whether a given edge in the graph is a cross-cutting edge based solely upon knowledge of the local neighborhood topology, and then removes the edge if it is non-cross-cutting. MTO-Sampler may also ``move'' an edge by changing a node connected to the edge if it is determined that, by doing so, the new edge is more likely to be a cross-cutting edge. We shall show in the paper that MTO-Sampler is capable of significantly improving the efficiency of random walks: For the example in Fig~\ref{doublelayer}, MTO-Sampler is capable of reducing the mixing time (i.e., query cost of a random walk) by 97\%. We also demonstrate through experimental results the significant improvement of efficiency achieved by MTO-Sampler for real-world social networks such as Epinions, Google Plus, etc.

The main contributions of our approach include:
\begin{itemize}
\item (Problem Novelty) We consider a novel problem of modifying the graph topology on-the-fly (during the random walk process) for the efficient third-party sampling of online social networks.
\item (Solution Novelty) We develop MTO-Sampler which determines whether an edge is (non-)cross-cutting based solely upon local neighborhood knowledge retrieved by the random walk, and then manipulates the graph topology to significantly improve sampling efficiency.
\item Our contributions also include extensive theoretical analysis (on various social network models) and experimental evaluation on synthetic and real-world social networks as well as online at Google+ which demonstrate the superiority of our MTO-Sampler over the traditional sampling techniques.
\end{itemize}

\section{Preliminaries}
\label{preliminaries}

\subsection{Model of Online Social Networks}

In this paper, we consider an online social network with an interface that allows input queries of the form
\begin{center}
$q(v)$: SELECT * FROM D WHERE USER-ID = $v$,
\end{center}
and responds with the information about user $v$ (e.g., user name, self-description, user-published contents) as well as the list of all other users connected with $v$ (e.g., $v$'s friends in the network). This is a model followed by many online social networks - e.g., Google Plus, Facebook, etc - with the interface provided as either an end-user-friendly web page or a developer-specific API call.

Consider the social-network topology as an undirected graph $G(V, E)$, where each node in $V$ is corresponding to a user in the social network\footnote{Note that without introducing ambiguity, we use ``node'' and ``social network user'' interchangeably in this paper.}, and each edge in $E$ represents the connection between two users. One can see that the answer to query $q(v)$ ($v \in V$) is a set of nodes $N(v) \subseteq V$, such that $\forall u \in N(v)$, there is an edge $e: (u, v) \in E$. We henceforth refer to $N(v)$ as the {\em neighborhood} of $v$. We use $k_v$ to denote the {\em degree} of $v$ - i.e., $k_v = |N(v)|$. For abbreviation, we also write $e:(u,v)$ as $e_{uv}$.

\vspace{1mm}
\noindent \framebox[\columnwidth]{\parbox{0.9\columnwidth}{{\bf Running Example}: We shall use, throughout this paper, the 22-node, 111-edge, barbell graph shown (as the original graph $G$) in Fig~\ref{doublelayer} as a running example.}}
\vspace{1mm}


\subsection{Performance Measures for Sampling}
\label{pf_measure}
In the following, we shall discuss two key objectives for sampling: (1) minimizing bias - such that the retrieved samples can be used to accurately estimate aggregate query answers, and (2) reducing the number of queries required for sampling - given the stringent requirement often put in place by real-world social networks on the number of queries one can issue per day.

\vspace{1mm}\noindent{\em Bias:} In general, sampling bias is the ``distance'' between the target (i.e., ideal) distribution of samples and the actual sampling distribution - i.e., the probability for each tuple to be retrieved as a sample. We shall further discuss a concrete bias measure in the next subsection and an experimental measure in Section~\ref{sec:eps}.

\vspace{1mm}\noindent{\em Query Cost:} To this end, we consider the number of {\em unique} queries one has to issue for the sampling process, as any duplicate query can be answered from local cache without consuming the query limit enforced by the social network provider.

\subsection{Random Walk}

A random walk is a Markov Chain Monte Carlo (MCMC) method which takes successive random steps on the above-described graph $G$ according to a {\em transition matrix} $P=(p_{uv}), u,v\in V$, where $p_{uv}$ represents the probability for the random walk to transit from  node $u$ to $v$. The premise here is that, after performing a random walk for a sufficient number of steps, the probability distribution for the walk to land on each node in $G$ converges to a {\em stationary distribution} $\pi$ which then becomes the sampling distribution\footnote{That is, if we take the end node as a sample}. There are many different types of random walks, corresponding to the different designs of $P$ and different stationary distributions. In this paper, we consider the simple random walk that has a stationary distribution of $\pi(v) = k_v/(2|E|)$ for all $v \in V$.

\begin{definition} (\textbf{Simple Random Walk}).
Given a current node $v$, a simple random walk chooses uniformly at random a neighboring node $u \in N(v)$ and transit to $u$ in the next step - i.e.,
\begin{equation}
    P_{vu}=\left\{
    \begin{array}{ll}
        1/k_v\,\,\,&\text{if $u \in N(v)$,}\\
        0\,\,\,&\text{otherwise.}
    \end{array}
    \right.
\end{equation}
\end{definition}

One can see that each step of a simple random walk requires exactly one query (i.e., $q(v)$ to identify the neighborhood of $v$ and select the next stop $u$). Thus, the performance of sampling - i.e., the tradeoff between bias and query cost - is determined by how fast the random walk converges to the stationary distribution. Formally, we measure the convergence speed as the {\em mixing time} defined as follows.

\begin{definition} (\textbf{Mixing Time}) \label{def:rpd}
Given $G:(V, E)$, after $t$ steps of simple random walk, the {\em relative point-wise distance} between the current sampling distribution and the stationary distribution is
    \begin{align}
    \bigtriangleup(t)=\mathop{\max}_{u, v\in V, v\in N(u)}\left\{\frac{|P_{uv}^{t}-\pi(v)|}{\pi(v)}\right\} 
    \end{align}
where $P_{uv}^{t}$ is the element of $P^t$ with indices $u$ and $v$. The {\em mixing time} of the random walk is the minimum value of $t$ such that $\bigtriangleup(t) \leq \epsilon$ where $\epsilon$ is a pre-determined threshold on relative point-wise distance.
\end{definition}
One can see from the definition that the relative point-wise distance $\bigtriangleup(t)$ measures the bias of the random walk after $t$ steps. Mixing time, on the other hand, captures the query cost required to reduce the bias below a pre-determined threshold $\epsilon$. In the following subsection, we describe a key characteristics of the graph that determines the mixing time - the conductance of the graph.




\subsection{Conductance: An Efficiency Indicator}
Intuitively, the conductance $\Phi$, which indicates how fast the simple random walk converges to its stationary distribution, measures how ``well-knit'' a graph is. Specifically, the conductance is determined by a {\em cut} of the graph $G$ - i.e., a partition of $V$ into two disjoint subsets $S$ and $\bar{S}$ - which minimizes the ratio between the probability for the random walk to move from one partition to the other and the probability for the random walk to stay in the same partition. Formally, we have the following definition.

\begin{definition} \label{def:con} (\textbf{Conductance}). The conductance\footnote{Rigidly, the conductance is determined by both the graph topology and the transition matrix of the random walk. Here we tailor the definition to the simple random walk considered in this paper.} of a graph $G:(V, E)$ is
    \begin{equation}
        \Phi(G) = \mathop{\min}_{S\subseteq V}\frac{|\{e_{uv} | u \in S, v \in \bar{S}\}|}{\min\left\{|\{e_{uv} | u \in S, v \in V\}|,|\{e_{uv} | u \in \bar{S}, v \in V\}|\right\}}.
     \nonumber
    \end{equation}
\end{definition}
The relationship between the graph conductance and the mixing time of a simple random walk is illustrated by the following inequality \cite{Alon1986}:
\begin{align}
    (1-2\Phi(G))^t \leq \bigtriangleup(t) \leq \frac{2|E|}{\min_{v\in V}k_v} \left(1-\frac{\Phi(G)^2}{2}\right)^t.
    \label{conductance_inequation}
\end{align}
One can see that the graph conductance $\Phi(G)$ ranges between 0 and 1 - and the larger $\Phi(G)$ is, the smaller the mixing time will be (for a fixed threshold $\epsilon$). Also note from (\ref{conductance_inequation}) the log scale relationship between $\Phi(G)$ and the mixing time. This indicates a small change on $\Phi(G)$ may lead to a significant change of the mixing time. Let
\begin{align}
\frac{2|E|}{\min_{v\in V}k_v}& \left(1-\frac{\Phi(G)^2}{2}\right)^t \leq \epsilon \\
\Rightarrow t &\geq \frac{1}{\log (1-\Phi(G)^2)}\log\left(\frac{\epsilon}{\frac{2|E|}{\min_{v\in V}k_v }}\right) \\
\Rightarrow t &\geq -\frac{1}{\log (1-\Phi(G)^2)}\log(c/\epsilon)
\end{align}
Here $c = \frac{2|E|}{\min_{v\in V}k_v }$. For example, increasing conductance from 0.010 to 0.012 will change the mixing time from $46050.5\cdot \log(c/\epsilon)$ to $31979.1\cdot \log(c/\epsilon)$.


\vspace{1mm}
\noindent \framebox[\columnwidth]{\parbox{0.9\columnwidth}{{\bf Running Example}:
The conductance of the barbell graph in the running example is $\Phi(G) = 1/(\binom{11}{2}+1) = 0.018$. The corresponding (and unique) $S$ and $\bar{S}$ are shown in Fig~\ref{doublelayer}. Correspondingly, the mixing time to reach a relative point-wise distance of $\bigtriangleup(t) \leq \epsilon$ is bounded from above by $14212.3 \cdot \log (22.2/\epsilon)$. We shall show throughout the paper how our on-the-fly topology modification techniques can significantly increase conductance and reduce the mixing time for this running example.
}}
\vspace{1mm}

\subsection{Key for Conductance: Cross-Cutting Edges}
\label{sec:obj_mto}

A key observation from Definition~\ref{def:con} is that the graph conductance critically depends on the number of edges which ``cross-cut'' $S$ and $\bar{S}$ - i.e., $|\{e_{uv} | u \in S, v \in \bar{S}\}|$. The more such cross-cutting edges there are, the higher the graph conductance is likely to be. On the other hand, since a non-cross-cutting edge is only counted in the denominator, the more non-cross-cutting edges there are in the graph, the lower the conductance is likely to be. Formally, we define cross-cutting edges as follows.

\begin{definition}(\textbf{Cross-cutting edges}). 
    For a given graph $G(V,E)$, an edge $e_{uv}$ is a cross-cutting edge if and only if there exists $S \subseteq V$ such that $u \in S$, $v \in \bar{S}$ where $\bar{S} = V \backslash S$, and
    \begin{align}
        \varphi(S) = \frac{|\{e_{uv} | u \in S, v \in \bar{S}\}|}{\min\left\{|\{e_{uv} | u \in S, v \in V\}|,|\{e_{uv} | u \in \bar{S}, v \in V\}|\right\}} \nonumber
    \end{align}
takes the minimum value among all possible $S \subseteq V$. 
\end{definition}
We note that in large graphs such as online social networks, it is reasonable to assume that the number of cross-cutting edges is relatively small when compared to total number of edges in $S$ or $\bar{S}$.

One can see that our objective of on-the-fly topology modification is then to increase the number of cross-cutting edges and decrease the number of non-cross-cutting edges as much as possible. We describe our main ideas for doing so in the next section.

\vspace{1mm}
\noindent \framebox[\columnwidth]{\parbox{0.9\columnwidth}{{\bf Running Example}:
For the barbell graph, adding any edge between the two halves of the graph produces a new cross-cutting edge, and increases the graph conductance from $\Phi(G) = 0.018$ to $0.035$ - i.e., the mixing-time will be reduced to 3758.1/14212.3 = 0.264 - a significant reduction of 75\%.
}}
\vspace{1mm}

\section{Main Ideas of On-The-Fly Topology Modification}


\subsection{Technical Challenges: Negative Results}
\label{sec:identification_challenge}

One can see from Section~\ref{sec:obj_mto} that the key for increasing the conductance of a social network (and thereby reducing the query cost of sampling) through topology modification is to determine whether an edge is a cross-cutting edge or not. Unfortunately, the {\em deterministic} identification of a cross-cutting edge is a hard problem (in the worst case) even if the entire graph topology is given as prior knowledge, as shown in the following theorem.

\begin{theorem}
The problem of determining whether an edge is cross-cutting or not is NP-hard.
\end{theorem}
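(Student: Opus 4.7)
The plan is to reduce from a classical NP-hard graph-cut problem. Because Definition~\ref{def:con} is precisely the graph conductance and the cross-cutting edges are by definition the edges crossing a conductance-minimizing cut $S^\star$, the identification problem is at least as hard as exhibiting such an $S^\star$. I would therefore reduce from the decision version of Minimum Bisection (or equivalently Sparsest Cut / Minimum Conductance), each of which is well known to be NP-hard.

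Given an instance $(G,k)$ of Minimum Bisection with $G=(V,E)$ and $|V|$ even, I would construct an auxiliary graph $G'$ together with a distinguished edge $e^\star = (a,b)$ such that $e^\star$ is a cross-cutting edge of $G'$ if and only if $G$ admits a bisection of size at most $k$. The construction would add two new ``anchor'' nodes $a$ and $b$ joined only by $e^\star$, and then wire $a$ and $b$ into $V$ via gadgets --- for example, attaching $a$ and $b$ to every vertex of $V$ through parallel paths of carefully chosen multiplicity. The multiplicities are calibrated so that (i) every minimum-conductance partition of $G'$ is forced to place $a$ and $b$ on opposite sides, and (ii) the induced cut inside $V$ is forced to be balanced. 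With these two structural constraints fixed, the anchor-induced contribution to $|\{e_{uv} : u \in S, v \in \bar{S}\}|$ is determined by the construction, so the conductance value at the optimum becomes a monotone function of the bisection cut size inside $G$. Tuning the gadget parameters so that the optimal conductance crosses a chosen threshold exactly when the bisection size crosses $k$ yields the desired equivalence, and since all multiplicities can be simulated by polynomially many unit-weight edges, the reduction is polynomial.

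The main obstacle I anticipate lies in handling the existential quantifier in the definition of a cross-cutting edge: the definition requires $e^\star$ to cross \emph{some} conductance-minimizing $S$, so even one alternative minimizer $S'$ that does not separate $a$ from $b$ would break the reduction. To resolve this I would attach a small tie-breaking gadget around $e^\star$ (for example, symmetry-breaking pendants on $a$ and $b$, or a few extra parallel paths routed between them) that strictly penalizes every partition failing to split the anchors. Once such a gadget pins down the optimal cut up to the trivial $S \leftrightarrow \bar{S}$ symmetry, the correspondence between ``$e^\star$ is cross-cutting'' and ``$G$ has a bisection of size at most $k$'' becomes exact, and the NP-hardness conclusion follows.
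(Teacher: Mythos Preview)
Your approach is substantially more elaborate than the paper's, and it also contains a gap. The paper dispatches the theorem in two lines: since the cross-cutting edges are, by definition, exactly the edges crossing some minimum-conductance cut, an oracle that decides ``is $e$ cross-cutting?'' for each $e\in E$ (there are only $|E|$ of them) would let one read off a sparsest cut in polynomial time; computing the sparsest cut / Cheeger constant is known to be NP-hard \cite{Chung2007b}, so the per-edge decision problem is NP-hard as well. No gadgetry, no thresholds, no Minimum Bisection---just the observation that the identification problem trivially solves the optimization problem.

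By contrast, your reduction has a logical hole. You calibrate the anchor gadgets so that \emph{every} minimum-conductance cut of $G'$ separates $a$ from $b$ and is balanced on $V$. If property (i) indeed holds for every optimal cut, then $e^\star=(a,b)$ is cross-cutting \emph{unconditionally}, regardless of $k$; there is no mechanism in your description by which the answer flips from ``yes'' to ``no'' as the bisection size crosses $k$. What you actually need is a \emph{competing} cut---one that does not separate $a$ from $b$ and whose conductance is calibrated to sit exactly at the threshold---so that the $a$-$b$ separating cut is optimal iff the minimum bisection of $G$ is at most $k$. That competitor is never introduced, so ``optimal conductance crosses a chosen threshold'' does not translate into ``$e^\star$ changes its cross-cutting status.'' This is fixable, but given that the paper's one-line Turing reduction already suffices, the entire gadget construction is unnecessary.
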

\begin{proof}
Consider the case of equal transition probability for each edge. The problem of finding all cross-cutting edges is equivalent with finding the optimum cut of the graph according to the Cheeger constant - a problem proved to be NP-hard \cite{Chung2007b}.
\end{proof}


%

Given the worst-case hardness result, we now consider the best-case scenario - i.e., is there any graph topology (which is not the worst-case input, of course) for which it is possible to efficiently identify cross-cutting edges? It is easy to see that, if the entire graph topology is given, then there certainly exist such graphs - with the original graph in Fig~\ref{doublelayer} being an example - for which the cross-cutting edge(s) can be straightforwardly identified. Nonetheless, our interest lies on making such identifications based solely upon local neighborhood knowledge - because of the aforementioned restrictions of online social-network interfaces. The following theorem, unfortunately, shows that it is impossible for one to deterministically {\em confirm} the cross-cutting nature of an edge unless the entire graph topology has been crawled.

\begin{theorem} \label{thm:nr2}
Given the local neighborhood topology of vertices accessed by a third-party sampler, $\{v_1,$ $\ldots, v_k\} \subset V$ in $G(V, E)$ where $k < |V|$, for any given edge $e: (v_i, v_j)$, there must exist a graph $G^\prime(V^\prime, E^\prime)$ such that: (1) $e: (v_i, v_j)$ is not a cross-cutting edge for $G^\prime$, and (2) $G$ and $G^\prime$ are indistinguishable from the view of the sampler - i.e., there exists $\{v^\prime_1,$ $\ldots, v^\prime_k\} \subset V^\prime$ which have the exactly same local neighborhood as $\{v_1,$ $\ldots, v_k\}$.
\end{theorem}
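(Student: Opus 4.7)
The plan is to construct $G'$ by attaching a large clique to an unobserved vertex of $G$, thereby relocating the minimum-conductance cut of $G'$ away from the edge $(v_i, v_j)$.

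Since $k < |V|$, I pick an unsampled vertex $w \in V \setminus \{v_1, \ldots, v_k\}$. Let $H$ be a complete graph $K_m$ on $m$ fresh vertices disjoint from $V$, pick an arbitrary $h \in V(H)$, and define $G' = (V \cup V(H),\; E \cup E(H) \cup \{(w,h)\})$, with $m$ taken sufficiently large (e.g.\ $m \ge 4|E|+2$). Setting $v'_i := v_i$, every new edge is incident either to $w$ (which is unsampled) or to vertices of $H$ (which are new), so $N_{G'}(v_i) = N_G(v_i)$ for every $i$. This verifies the indistinguishability condition (2).

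To establish (1), I would show that the unique minimum-conductance cut of $G'$ is $(V(H), V)$, whose only crossing edge is $(w,h) \ne (v_i, v_j)$. A direct calculation gives $|\mathrm{cut}(V(H), V)| = 1$, $\mathrm{vol}_{G'}(V(H)) = m^2 - m + 1$, and $\mathrm{vol}_{G'}(V) = 2|E|+1$, so this cut has conductance $1/(2|E|+1)$ for $m$ large. For every other cut $(S, \overline{S})$ I consider two cases. Case (a): $V(H)$ lies entirely on one side, so $S = V(H) \cup A'$ with $\emptyset \ne A' \subsetneq V$; then $\mathrm{vol}_{G'}(\overline{S}) \le 2|E|$ strictly (since $A'$ contains at least one vertex of positive degree when $G$ is connected), and a short ratio estimate yields $\varphi(S) > 1/(2|E|+1)$. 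Case (b): the cut splits $V(H)$ with $h_A \le m - h_A$ vertices on each side; the cut then contains at least $h_A(m-h_A) \ge m-1$ edges from $K_m$, while the smaller side has volume at most $h_A m + 2|E| + O(1)$, so $\varphi(S)$ is bounded below by a positive constant independent of $m$, which exceeds $1/(2|E|+1)$.

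The main obstacle is completing case (b): ruling out every mixed cut that combines a split of $K_m$ with a chosen subset of $V$. The decisive feature is the high internal expansion of $K_m$: every nontrivial bipartition of $V(H)$ forces at least $h_A(m - h_A)$ cut edges inside $H$, and these dominate the achievable $\min(\mathrm{vol}(S), \mathrm{vol}(\overline{S}))$ once $m$ is large compared to $|E|$. With this in hand, $(V(H), V)$ is confirmed as the unique minimum-conductance cut of $G'$, so its only crossing edge is $(w,h)$; since $v_i, v_j \in V$ while $h \notin V$, the edge $(v_i, v_j)$ cannot cross this cut and is therefore not cross-cutting in $G'$, completing the proof.
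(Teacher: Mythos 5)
Your proof is correct in approach, but uses a genuinely different construction from the paper. The paper builds $G'$ by attaching a \emph{mirror copy} of $G$: it inserts a disjoint clone $G^0$ of the entire graph and adds a single bridge edge $(w, w^0)$ between an unsampled vertex $w$ and its clone. Because the two sides of the cut $(V, V^0)$ then have \emph{exactly equal} volume and only one crossing edge, this cut is immediately seen to be the minimizer, and $(v_i, v_j)$ (lying entirely in $V$) cannot be cross-cutting. Your construction instead attaches a large clique $K_m$ to $w$ via a single bridge, tuning $m$ so that the clique's volume dwarfs that of $G$. Both constructions share the essential idea --- hang a large appendage off an unsampled vertex through a single bottleneck edge, exploiting that the sampler's local views at $v_1, \ldots, v_k$ are unchanged because only $w$'s neighborhood (which was never queried) and the fresh vertices are touched.

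The tradeoff: the paper's mirror is parameter-free and its symmetry makes ``the bridge is the unique bottleneck'' visually obvious (though the paper merely asserts it). Your clique construction requires choosing $m$ large and then a two-case volume argument; the case you flag as the ``main obstacle'' --- ruling out cuts that split $K_m$ --- does go through, since any split with $h_A \le m - h_A$ clique vertices on the smaller side forces at least $h_A(m - h_A)$ internal clique cut edges while $\min(\mathrm{vol}(S), \mathrm{vol}(\bar{S})) \le h_A(m-1) + 2|E| + 2$, giving conductance at least roughly $1/2$ for large $m$. Note also your argument, like the paper's, implicitly assumes $G$ is connected (so that case (a) has at least one cut edge and $w$ has positive degree); this is a standing assumption of the random-walk setting and is harmless. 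Overall, a valid alternative proof, slightly heavier on calculation but equally sound.
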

\begin{proof}
The construction of $G^\prime$ can be stated as follows: First, insert $n$ extra vertices $v^0_1, \ldots, v^0_n$ and $e$ extra edges into the graph, such that $\forall e:(v_i, v_j) \in E$, there is $e^0:(v^0_i, v^0_j)$ in the new graph. Note that at this moment, there is no edge between any $v_i$ and $v^0_j$. Then, in the second step, identify from $G$ a vertex $w$ which has not been accessed by the sampler - i.e., $w \not \subseteq \{v_1, \ldots, v_k\}$ - and insert into the graph an edge $e:(w, w^0)$. One can see that the only cross-cutting edge in the output graph $G^\prime$ is $(w, w^0)$ - i.e., $e: (v_i, v_j)$ cannot be a cross-cutting edge for $G^\prime$. An intuitive demonstration of the proof is shown in Fig~\ref{fig:adding_edge}.
\end{proof}

It is important to note from the theorem, however, that it still leaves two possible ways for one to increase the conductance of a social network based on only the local neighborhood knowledge: (1) While the theorem indicates that it is impossible to deterministically {\em confirm} the cross-cutting nature of an edge, it may still be possible to deterministically {\em disprove} an edge from being cross-cutting - i.e., we may prove that an edge is definitely non-cross-cutting based on just local neighborhood knowledge, and therefore {\em remove} it to increase the conductance deterministically. (2) It is still possible to conditionally or probabilistically evaluate the likelihood of an edge being cross-cutting - e.g., we may determine that an edge absent from the original graph is more likely to be a cross-cutting edge (if added) than an existing edge, and thereby {\em replace} the existing edge with the new one to increase the conductance in a probabilistic fashion. We consider the removal and replacement strategies, respectively, in the next two subsections.

\begin{figure}
    \centering
    \includegraphics[width=3.4in]{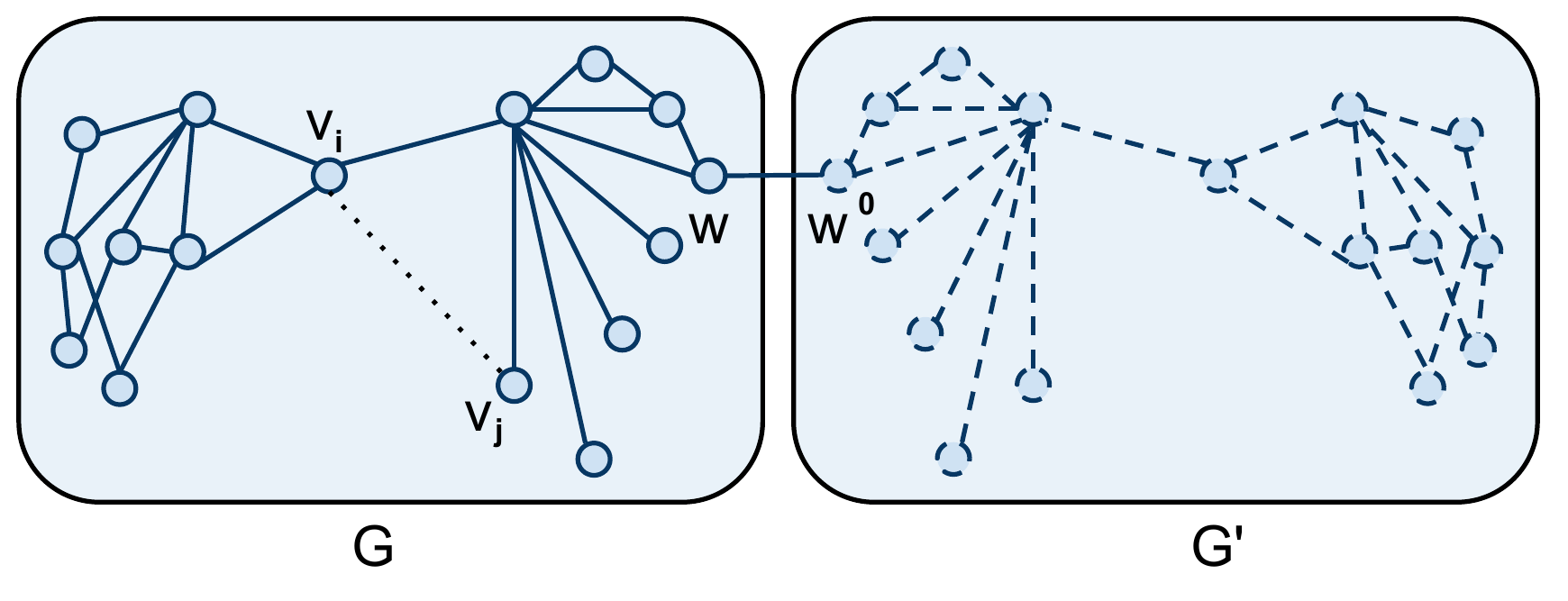}
    \caption{By cloning graph $G$, we can always construct graph $G^\prime$ such that simply adding an edge $e: (v_i,v_j)$ may decrease the conductance.}
    \label{fig:adding_edge}
\end{figure}


\subsection{Deterministic Identification of Non-cross-cutting Edges}
To illustrate the main idea of our deterministic identification of non-cross-cutting edges (for removals), we start with an example in Fig~\ref{fig:delete} to show why we can determine, based solely upon the local neighborhoods of $u$ and $v$ as shown in the graph, that $e: (u, v)$ (henceforth denoted by $e_{uv}$) in the Fig is not a cross-cutting edge. The intuition behind this is fairly simple: When $u$ and $v$ share a large number of common neighbors (e.g., 5 in Fig~\ref{fig:delete}) but have relatively few other edges (e.g., 1 each in Fig~\ref{fig:delete}), it is highly unlikely for the partition to cut through $e_{uv}$ rather than the other edges of $u$ and $v$ - e.g., $(u, u_0)$ in Fig~\ref{fig:delete} - if it cuts through any edges associated with $u$ and $v$ at all.

\begin{figure}
    \centering
    \includegraphics[width=3.2in]{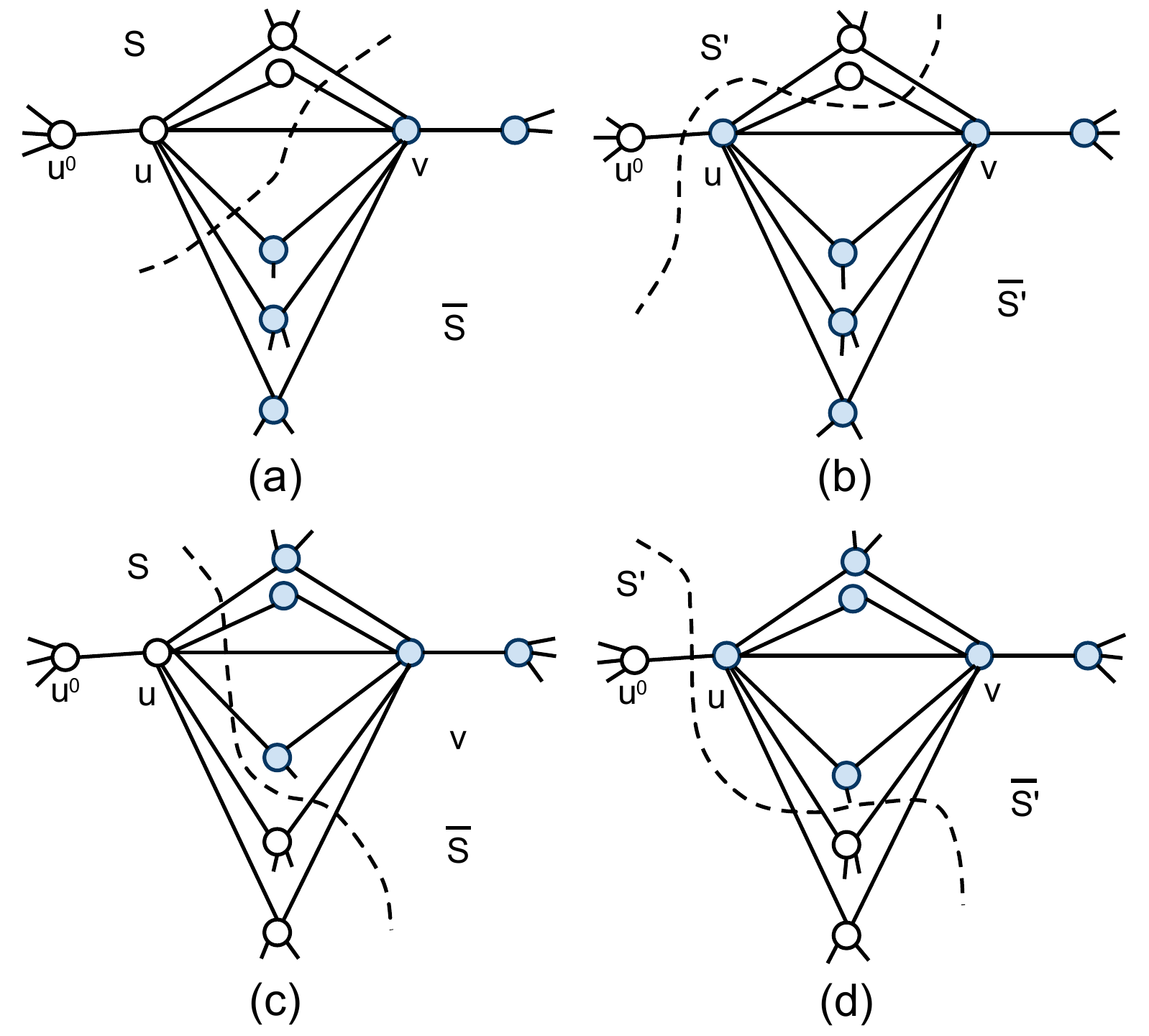}
    \caption{A figure shows that the edge $e_{uv}$ cannot be the cross-cutting edge in theorem \ref{delete_propo1}. Locally, (a) and (c) have 6 cross-cutting edges, while (b) and (d) only have 5 of them.}
    \label{fig:delete}
\end{figure}

The rigid (dis-)proof can be constructed with contradiction. Suppose $e_{uv}$ is a cross-cutting edge between two partitions of the graph, $S$ and $\bar{S}$. One can see that since $u$ and $v$ belong to different partitions, there must be at least 6 cross-cutting edges in the subgraph (Fig~\ref{fig:delete} (a) depicts an example). We now show in the following discussion that this is actually impossible because one can always construct another partition $S^\prime$ and $\bar{S}^\prime$ (by ``dragging'' $u$ and $v$ into the same part) and reduce the number of cross-cutting edges to at most 5. Note that this contradicts the definition of $S$ and $\bar{S}$ being a configuration which minimizes the number of cross-cutting edges. Thus, $e_{uv}$ cannot be a cross-cutting edge.

To understand how the construction of $S^\prime$ and $\bar{S}^\prime$ works, consider Fig \ref{fig:delete} (b) as an example. For the partition illustrated in Fig~\ref{fig:delete} (a), we can ``drag'' $u$ into $\bar{S}$ to form the new configuration, such that the number of cross-cutting edges associated with $u$ and $v$ is now at most 5, as shown in Fig \ref{fig:delete} (b). Note that the other edges not shown in the subgraph (no matter cross-cutting or not) are not affected by the re-configuration, because all vertices associated with $u$ are already known in the local neighborhood of $u$ (shown in Fig~\ref{fig:delete}). 

More generally, for the other possible settings of $S$ and $\bar{S}$ (such as Fig~\ref{fig:delete}(c)), one can construct the re-configuration in analogy with the following general principle: First, find the ``more popular'' partition (i.e., either $S$ or $\bar{S}$) among the 5 common neighbors of $u$ and $v$ (e.g., $\bar{S}$ in Fig~\ref{fig:delete} (a) or Fig~\ref{fig:delete} (c)). Then, drag one of $u$ and $v$ to ensure that both of them are in this more popular partition under the new configuration. One can see that, since at most 2 common neighbors of $u$ and $v$ are in the less popular partition, the number of cross-cutting edges under the new configuration is at most $2*2 + 1$, where $2*2$ is the number of cross-cutting edges associated with the 2 common neighbors in the less popular partition (at most 2 for each), and 1 is the number of cross-cutting edge associated with the other (non-common) neighbor of the node being dragged (i.e., $u^0$ in Fig~\ref{fig:delete} (a)).


The following theorem depicts the general case for which we can remove an edge on-the-fly to increase the graph conductance. Recall that $N(u)$ and $k_{u}$ represent the set of neighbors and the degree of a node $u$, respectively. 

\begin{theorem} {\em [Edge Removal Criteria]:} Given $G(V, E)$, $\forall u, v \in V$, if $e_{uv}\in E$ and
    \begin{equation}\left\lceil\frac{|N(u)\cap N(v)|}{2}\right\rceil + 1 > \frac{1}{2}\max\{k_u,k_v\} ,
    \label{delete_equ1}    
    \end{equation}
    then $e_{uv}$ is not a cross-cutting edge.
    \label{delete_propo1}
\end{theorem}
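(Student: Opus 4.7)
The plan is a proof by contradiction. Suppose $e_{uv}$ is cross-cutting: then there is a cut $(S,\bar S)$ with $u\in S$, $v\in\bar S$ and $\varphi(S)=\Phi(G)$. I would construct a competing cut $S'$ with $\varphi(S')<\varphi(S)$ and thereby contradict the optimality of $S$.

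The first concrete step is to dissect the local neighborhood along this cut. Writing $c=|N(u)\cap N(v)|$, partition the common neighbors into $A=N(u)\cap N(v)\cap S$ and $B=N(u)\cap N(v)\cap \bar S$, with $|A|=a$, $|B|=b$, $a+b=c$; similarly split the private neighbors of $u$ into $(p,q)$ and those of $v$ into $(r,s)$ according to which side of the cut they lie on, so that $p+q=k_u-1-c$ and $r+s=k_v-1-c$. The hypothesis $\lceil c/2\rceil+1>\tfrac{1}{2}\max\{k_u,k_v\}$ rewrites as $\max\{k_u,k_v\}\le 2\lceil c/2\rceil+1$, which pins both $p+q$ and $r+s$ to be $0$ when $c$ is even and at most $1$ when $c$ is odd. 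Thus the hypothesis is really asserting that almost every neighbor of $u$ or of $v$ is a common neighbor.

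Without loss of generality take $a\ge b$, so $S$ is the more populous side for the common neighbors, and define $S'=S\cup\{v\}$ (drag $v$ across the cut). An edge-by-edge accounting of the cross-cutting edges incident to $u$ or $v$ before and after the move gives: $(u,v)$ drops out ($-1$); the $a$ edges from $v$ to $A$ stop crossing ($-a$); the $b$ edges from $v$ to $B$ start crossing ($+b$); $v$'s exclusive neighbors contribute $s-r$; edges incident only to $u$ are untouched. The net change in the number of cross-cutting edges is therefore
\begin{equation*}
\Delta \;=\; (b-a) + (s-r) - 1.
\end{equation*}
A short case split on the parity of $c$ --- using $a-b\ge 0$ when $c$ is even and $a-b\ge 1$ when $c$ is odd, together with $s\le r+s\le 1$ and $r\ge 0$ --- yields $\Delta\le -1$, so the numerator of $\varphi$ strictly decreases.

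The hard part will be turning this strict drop of the numerator into a strict drop of the ratio $\varphi$, because moving $v$ also shifts the denominator $\min\{\mathrm{Vol}(S),\mathrm{Vol}(\bar S)\}$ (by $\pm k_v$ in the standard convention, and by the analogous $b+s$ and $1+a+r$ under the paper's edge-counting convention), and in principle the smaller side can shrink. I would handle this by running the symmetric move $S''=S\setminus\{u\}$ in parallel: an analogous accounting yields a drop of at least one in the cross-cutting count for $S''$, while the two candidate moves perturb the imbalance $\mathrm{Vol}(S)-\mathrm{Vol}(\bar S)$ in opposite directions. At least one of the two moves therefore leaves the smaller side no smaller than before, and selecting that move delivers $\varphi<\varphi(S)$, closing the contradiction.
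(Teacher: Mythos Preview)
Your numerator argument is correct and is essentially the paper's: assume $e_{uv}$ crosses an optimal cut, drag one of $u,v$ to the side that holds the majority of their common neighbors, and verify that the number of cross-cutting edges strictly drops. The paper phrases the same move slightly differently (it picks the vertex $x\in\{u,v\}$ for which the number of crossing edges among the $c$ length-$2$ paths already exceeds $k_x/2-1$), but the mechanism is identical.

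The genuine gap is in your denominator step. Your claim that the ``symmetric move'' $S''=S\setminus\{u\}$ also drops the cross-cutting count by at least one is false. In your own notation, that move changes the count by $\Delta'=(a-b)+(p-q)-1$, and under your WLOG choice $a\ge b$ this can be positive: if all $c$ common neighbors lie in $S$ (so $a=c$, $b=0$) and $c$ is even (so $p=q=0$), then $\Delta'=c-1>0$ once $c\ge 2$. Thus only \emph{one} of the two candidate moves is guaranteed to shrink the numerator, so you are not free to select whichever one protects $\min\{\mathrm{Vol}(S),\mathrm{Vol}(\bar S)\}$. In fact the statement with no side assumption is simply false: on a triangle every edge satisfies the hypothesis ($c=1$, $k_u=k_v=2$, $\lceil 1/2\rceil+1=2>1$), yet every edge is cross-cutting.

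The paper does not attempt your balancing trick. It explicitly invokes the standing assumption, stated right after the definition of cross-cutting edges, that in the graphs under consideration the number of cross-cutting edges is small relative to the total number of edges in $S$ and in $\bar S$. Under that assumption the denominator is essentially insensitive to moving a single vertex, so the strict decrease of the numerator forces $\varphi(S')<\varphi(S)$ and closes the contradiction directly. You should invoke the same assumption; your attempted workaround cannot succeed.
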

\begin{proof} Let $n=|N(u)\cap N(v)|$, without losing generality, assuming $u\in S, v\in \bar{S}$, then there must be n cross-cutting edges in these n disjoint paths of length 2 between $u$ and $v$.
    We denote $n_u, n_v$ as the number of cross-cutting edges in these n paths connected with u and v, so $n_u+n_v=n$. One can see that if we try to ``drag'' $u$ from $u\in S$ to $u\in \bar{S}$, all the edges connected with $u$ would be modified, e.g. flip the edges linked to $u$, which means the old cross-cutting edges will be the new non-cross-cutting edges, and vice versa.
    As the assumption from inequality (\ref{delete_equ1}): $\left\lceil \frac{n}{2} \right\rceil +1 >  \frac{1}{2}\max\{k_u,k_v\}$, so either $n_u + 1  >  \frac{1}{2}k_u$ or $n_v + 1 >  \frac{1}{2}k_v$ holds. Without losing generality, assuming for vertex $u$ the inequality holds, we change $u$ from set $S$ to $\bar{S}$, so the number of cross-cutting edges must be strictly decreasing. Since we have assumed that the number of edges in $S$ or $\bar{S}$ is much greater than the number of cross-cutting edges, so $\Phi (G)$ must decrease according to the decrease of the number of cutting-edges, which leads to the contradiction of $e_{uv}$ is a cross-cutting edge.
\end{proof}

Due to space limitations, please refer to the technical report  \cite{technical_report} for the proofs of all theorems in the rest of the paper. Intuitively, theorem \ref{delete_propo1} gives us a clue that if two nodes have enough common neighbors, then we can {\em deterministically} say that the edge between them is non-cross-cutting. Moreover, (\ref{delete_equ1}) is tight - i.e., if it does not hold, then we can always construct a counter example where $e_{uv}$ is cross-cutting - as shown in the following theorem.


\newtheorem{corollary}{Corollary}
\begin{corollary} For all $N(u), N(v), k_u, k_v$ which satisfy 
\begin{equation}
\left\lceil \frac{|N(u)\cap N(v)|}{2} \right\rceil +1 \leq \frac{1}{2}\max\{k_u,k_v\},
\label{delete_equ2}
\end{equation} 
there always exists a graph $G(V, E)$ in which $e_{uv}$ is cross-cutting. 
\label{delete_propo2}
\end{corollary}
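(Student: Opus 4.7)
The plan is to prove the corollary by exhibiting, for every admissible tuple $(N(u), N(v), k_u, k_v)$ satisfying the stated inequality, an explicit graph $G$ in which the minimum-conductance cut separates $u$ from $v$, so that $e_{uv}$ becomes cross-cutting. Paralleling the ``drag'' argument of Theorem~\ref{delete_propo1}, the construction is a \emph{double-star}: $u$ and $v$ are the two hubs joined by $e_{uv}$; each of the $n := |N(u)\cap N(v)|$ common neighbors is a degree-$2$ bridge attached only to $u$ and $v$; and the remaining $p := k_u - n - 1$ (resp.\ $q := k_v - n - 1$) neighbors of $u$ (resp.\ $v$) are leaves attached only to $u$ (resp.\ $v$). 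Because every edge is incident to $\{u,v\}$, this gives tight control over every possible cut.

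The conductance analysis then proceeds in two parts. First, for any cut with $u$ and $v$ on the same side, every vertex on the opposite side is a pendant whose unique neighbor(s) lie in $S$; the opposite side therefore has no internal edges, the edges touching it coincide exactly with the cross edges, and $\varphi(S)\geq 1$. Second, for a cut $(S,\bar S)$ with $u\in S$ and $v\in\bar S$, the cross-edge count is minimized by placing all of $U$ in $S$ and all of $V'$ in $\bar S$, yielding exactly $n+1$ cross edges: $e_{uv}$ together with one half-path per common neighbor. Parameterizing such cuts further by the number $a$ of common neighbors assigned to $S$, a direct count shows that the two denominator terms in the conductance definition equal $k_u + a$ and $k_v + n - a$; choosing $a$ close to $(k_v + n - k_u)/2$ balances them at roughly $(k_u + k_v + n)/2$, so $\varphi(S) \leq 2(n+1)/(k_u + k_v + n - 1) \leq 1$.

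Combining the two parts, the minimum-conductance cut of $G$ must separate $u$ from $v$, so $e_{uv}$ is cross-cutting and the corollary follows. The main obstacle I anticipate is verifying the last inequality once $a$ is rounded to an integer: the rounding slack is at most one unit of volume, and I will use the hypothesis $\lceil n/2\rceil+1\leq\max(k_u,k_v)/2$ precisely to rule out the regime where this slack could push the cross-cut's $\varphi$ strictly above $1$. A secondary concern is that in certain boundary cases (for instance $k_v = n+1$ with $n$ even) the non-cross cuts and the optimal cross cut may tie at $\varphi=1$; in those cases I will attach a small auxiliary dense gadget to one of the leaves in order to break the tie strictly in favor of the cross cut, without altering either of the prescribed local neighborhoods $N(u)$ or $N(v)$.
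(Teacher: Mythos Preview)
Your plan is correct, and it takes a genuinely different route from the paper's proof. The paper (see the appendix and Fig.~\ref{delete_tightness}) attaches a large fan of degree-$1$ leaves to \emph{every} neighbor $w$ of $u$ and $v$, forcing $k_w\gg\max\{k_u,k_v\}$. In that construction the optimal cut is pinned down by the heavy neighbor-clusters, and the only remaining freedom is which side $u$ sits on; the hypothesis~(\ref{delete_equ2}) then appears explicitly as the comparison $n+1$ versus $n+|O_u|$ (resp.\ $2\lfloor n/2\rfloor+|O_u|$), so the argument makes the tightness of Theorem~\ref{delete_propo1} visible edge-for-edge.

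Your double-star is markedly simpler: since every edge is incident to $\{u,v\}$, any same-side cut has $\varphi=1$ automatically, and for a separating cut one has $\min(k_u+a,\,k_v+n-a)\ge \min(k_u,\,k_v+n)\ge n+1$ already at $a=0$, because $k_u,k_v\ge n+1$ by construction. Thus $\varphi_{\text{cross}}\le 1$ holds for \emph{every} placement of the common neighbors, and the hypothesis~(\ref{delete_equ2}) is never actually invoked; your anticipated rounding obstacle does not materialize. Moreover, since the definition of a cross-cutting edge only asks that \emph{some} minimizing $S$ separate $u$ from $v$, a tie at $\varphi=1$ already suffices and the auxiliary gadget is unnecessary. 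What your approach buys is brevity and a hypothesis-free argument; what the paper's construction buys is an example that also respects the implicit ``large-graph'' assumption behind Theorem~\ref{delete_propo1} (denominator dominated by internal edges, not cross edges) and in which the threshold of~(\ref{delete_equ2}) is the exact pivot of the comparison rather than incidental.
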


\vspace{1mm}
\noindent \framebox[\columnwidth]{\parbox{0.9\columnwidth}{{\bf Running Example}: With our on-the-fly edge removals, any random walk is essentially following an overlay topology $G^*$ which can be constructed by applying Theorem \ref{delete_propo1} to every edge in the original graph $G$. For the bar-bell running example, the solid lines in Fig~\ref{doublelayer} depicts $G^*$. The conductance is now $\Phi(G^*) = 0.053$. Compared with the original conductance of 0.018, the corresponding lower bound on mixing time is reduced to 1638.3/14212.3 = 0.115 of the original value - a reduction of $89\%$.}}
\vspace{1mm}

\subsection{Conditional Identification of Cross-cutting Edges}

\begin{figure}
    \centering
    \includegraphics[width=2.4in]{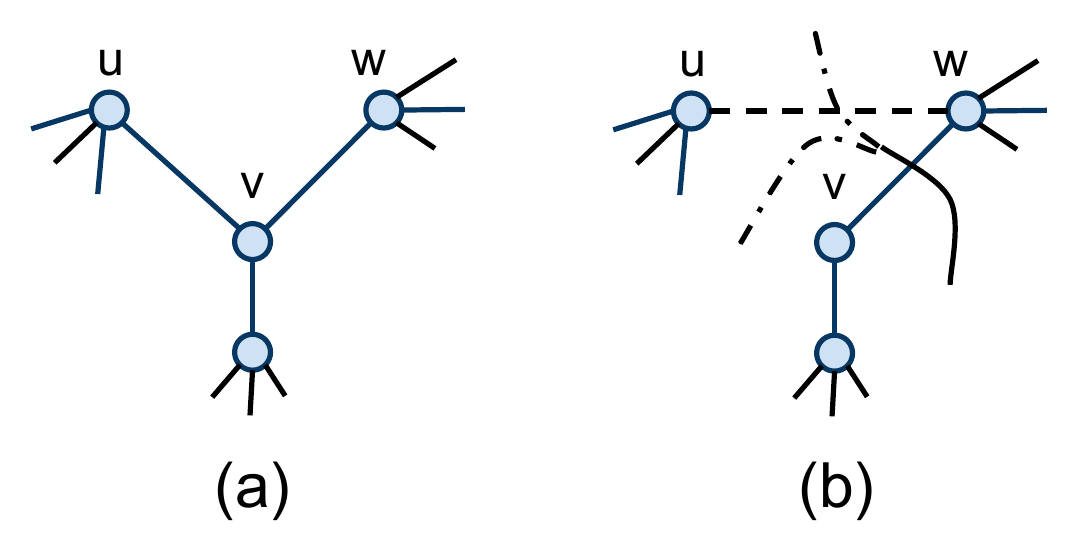}
    \caption{Replace the edge $e_{uv}$ with $e_{uw}$}
    \label{replace_fig}
\end{figure}

We now describe our second idea of conditionally identifying cross-cutting edges. We start with an example in Fig \ref{replace_fig} to show why we can {\em replace} an existing edge with a new one such that (1) the new edge is more likely to be crosscutting, and (2) the replacement is guaranteed to {\em not} decrease the conductance.

Specifically, consider the replacement of $e_{uv}$ by $e_{uw}$ given the neighborhoods of $u$ and $v$. A key observation here is that $e_{uv}$ and $e_{vw}$ cannot be both cross-cutting edges. The reason is that otherwise we could always ``drag'' $v$ into the same partition as $u$ and $w$ to reduce the number of cross-cutting edges by at least 1. Given this key observation, one can see that the replacement of $e_{uv}$ by $e_{uw}$ will only have two possible outcomes:
\begin{itemize}
\item if $e_{uv}$ is a cross-cutting edge, then $e_{uw}$ must also be a cross-cutting edge because, due to the observation, $e_{vw}$ cannot be a cross-cutting edge. Thus, the replacement leads to no change on the graph conductance.
\item if $e_{uv}$ is not a cross-cutting edge, then replacing it with $e_{uw}$ will either keep the same conductance, or increase the conductance if $e_{uw}$ is cross-cutting.
\end{itemize}
As such, the replacement operation never reduces the conductance, and might increase it when $e_{uw}$ is cross-cutting. More generally, we have the following theorem.

\begin{theorem}Given $G(V,E)$, $\forall v\in V$, if $k_v=3$, $u,w \in N(v)$, then replacing edge $e_{uv}$ with $e_{uw}$ will not decrease the conductance, while it also has positive possibility to increase the conductance. 
\label{replace_propo}
\end{theorem}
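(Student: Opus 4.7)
The plan is to fix an arbitrary partition $(S,\bar{S})$ of $V$, compare the ratio $\varphi^\prime(S)$ of this cut in the modified graph $G^\prime$ against a suitable ratio in $G$, and then minimise over $S$ to conclude $\Phi(G^\prime) \geq \Phi(G)$. The engine of the argument is the key observation given informally just above the theorem: since $k_v = 3$, any cut that makes both $e_{uv}$ and $e_{vw}$ cross-cutting can be strictly improved in $G$ by dragging $v$ onto the side of $u$ and $w$, because this flips those two cross-cutting edges into internal ones while only affecting the single remaining $v$-incident edge (to $v$'s third neighbour $x$). Under the standing assumption from Section~\ref{sec:obj_mto} that the cross-cutting count is small compared with the internal edges of $S$ or $\bar{S}$, the corresponding drop in $\varphi$ is strict, so no conductance-minimising cut of $G$ places $v$ alone across from $u$ and $w$.

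I will then split on the positions of $u,v,w$ in $(S,\bar{S})$ and handle the three straightforward cases directly. If $u,v,w$ all lie on the same side, $e_{uv}$ is internal in $G$ and $e_{uw}$ is internal in $G^\prime$, so the swap leaves $m_c$, $m_S$, $m_{\bar{S}}$ untouched and $\varphi^\prime(S) = \varphi(S) \geq \Phi(G)$. If $u$ is alone on one side, $e_{uv}$ is cross-cutting in $G$ and $e_{uw}$ is cross-cutting in $G^\prime$, so once again all three counts are preserved. If $w$ is alone on one side, the replacement turns an internal edge of $u$ and $v$'s side into a cross-cutting edge; short algebra on the two versions of $\min(m_S+m_c, m_{\bar{S}}+m_c)$, using the elementary inequality $(a+1)/(b+1) \geq a/b$ for $a,b \geq 0$, yields $\varphi^\prime(S) \geq \varphi(S)$, with strict inequality whenever $m_{\bar{S}} > 0$.

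The main obstacle is the remaining case in which $v$ sits alone opposite $u$ and $w$; here the ``same-cut'' bound $\varphi^\prime(S) \geq \varphi(S)$ can actually fail, because removing the cross-cutting $e_{uv}$ shrinks $m_c$ while the newly internal $e_{uw}$ only nudges a denominator. I handle this case by pairing $(S,\bar{S})$ in $G^\prime$ with the reassignment cut $(T,\bar{T}) = (S\setminus\{v\},\, \bar{S}\cup\{v\})$ (or the symmetric variant) in $G$, which by the key observation has ratio at most $\varphi(S)$ in $G$. A direct accounting of how $m_{T}, m_{\bar{T}}, m_{c_T}$ differ from $m_S^\prime, m_{\bar{S}}^\prime, m_c^\prime$ under the two operations, combined with the small-$m_c$ assumption that caps the $O(1)$ denominator fluctuations, gives $\varphi^\prime(S) \geq \varphi(T) \geq \Phi(G)$. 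Minimising over $S$ completes the inequality $\Phi(G^\prime) \geq \Phi(G)$.

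For the positive-probability claim, I instantiate the third case above: take any graph in which $u$ and $v$ sit on the same side of the optimal cut, $w$ sits on the other, and $e_{uw}$ is absent from $E$. The strict version of the case-(iii) calculation gives $\varphi^\prime(S^*) > \varphi(S^*) = \Phi(G)$ on the originally optimal cut, and one verifies that no competing cut in $G^\prime$ achieves a smaller ratio than $\Phi(G)$ by re-running the earlier case analysis at every other $S$. A small modification of the running barbell in Fig.~\ref{doublelayer} furnishes a concrete witness, establishing that the conductance can strictly increase.
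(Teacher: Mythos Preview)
Your proposal is correct and pivots on the same observation as the paper---that when $k_v=3$, no conductance-minimising cut can separate $v$ from both $u$ and $w$, since dragging $v$ across strictly reduces the cut ratio under the standing small-$m_c$ assumption of Section~\ref{sec:obj_mto}. The organisation, however, is genuinely different. The paper's proof is a two-case split on whether $e_{uv}$ is cross-cutting at the optimal cut of $G$: if it is not, the replacement is declared harmless because the denominator $a(S)$ does not move; if it is, the drag argument forces $e_{uw}$ to be cross-cutting as well, so the swap is neutral at that cut. You instead fix an arbitrary cut $S$ in the modified graph $G'$, case on the positions of $u,v,w$, lower-bound $\varphi'(S)$ by either $\varphi_G(S)$ (cases i--iii) or $\varphi_G(T)$ for the dragged cut (case iv), and then minimise over $S$.

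What each buys: your cut-by-cut analysis is more careful about the fact that $\Phi(G')=\min_S\varphi'(S)$ ranges over \emph{all} cuts, a point the paper's argument leaves implicit by working only at $G$'s minimiser and tacitly assuming the minimiser does not jump. The paper's version is shorter and stays in the cross-cutting/non-cross-cutting vocabulary that drives the rest of the framework. Both proofs lean on the small-$m_c$ assumption at the same juncture (to absorb $O(1)$ denominator shifts); you make that dependence explicit in case (iv), where the exact inequality $\varphi'(S)\geq\varphi_G(T)$ can fail by an $O(1/m_{\bar S})$ term. One small note on the strict-increase clause: your first argument via case (iii) at $S^*$ only yields $\Phi(G')\geq\Phi(G)$ unless the minimiser in $G$ is unique and no case-(iv) cut drags back onto it; the barbell witness you invoke at the end already suffices on its own, which is essentially the paper's approach too (it simply observes that $e_{uw}$ can become an additional cross-cutting edge when $e_{vw}$ is one).
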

%

Next, we are going to prove that $k_v=3$ is actually the only case when replacement is guaranteed to not reduce the conductance, as shown by the following corollary.

\begin{corollary}\label{replace_tight} For $v\in V$, if $k_v \neq 3$, then there always exist a graph 
$G(V,E)$, $\forall u,w \in N(v)$, such that replacing $e_{uv}$ with $e_{uw}$ will decrease the conductance or have no effect.
\end{corollary}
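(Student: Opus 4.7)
The corollary is the tightness counterpart of Theorem~\ref{replace_propo}. That theorem's proof hinges on the fact that when $k_v=3$, any minimum cut must place $v$ on the majority side among its three neighbors, so at most one of $e_{uv}, e_{vw}$ can be cross-cutting; replacing $e_{uv}$ by $e_{uw}$ therefore either preserves a cross-cutting edge or swaps a non-cross-cutting one for another (possibly cross-cutting) one. For $k_v \neq 3$ this majority-side mechanism breaks, and the plan is to exhibit, for each such $k_v$, an explicit graph $G$ on which \emph{every} possible replacement $e_{uv}\to e_{uw}$ leaves $\Phi(G)$ unchanged or strictly decreases it. This gives the required $\forall u,w$ guarantee from a single witness graph.

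\textbf{Small cases.} When $k_v=1$ the claim is vacuous, since $|N(v)|<2$ prevents the choice of two (distinct) endpoints $u,w$. When $k_v=2$, take $G$ to be the 4-cycle $v\!-\!u\!-\!a\!-\!w\!-\!v$ (so $N(v)=\{u,w\}$ and $e_{uw}\notin E$). Enumerating the handful of non-trivial bipartitions shows that both $G$ and the graph $G'$ produced by the essentially unique replacement $e_{uv}\to e_{uw}$ have minimum-cut ratio $1/2$, so the replacement has no effect on $\Phi$.

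\textbf{Case $k_v\ge 4$.} Construct a bridge graph: take two dense clusters $L$ and $R$ (e.g.\ cliques $K_m$ for some $m\gg k_v$) that are otherwise disconnected, and attach $v$ to $\lceil k_v/2\rceil$ vertices of $L$ and $\lfloor k_v/2\rfloor$ vertices of $R$. The unique minimum cut (up to symmetry) is the bridge cut $(L,\{v\}\cup R)$ with exactly $\lfloor k_v/2\rfloor \ge 2$ cross-cutting edges. For any $u,w\in N(v)$ and the replacement $e_{uv}\to e_{uw}$, exhibit the \emph{same} bipartition in the resulting graph $G'$ and compare ratios: (i) if $u,w$ lie on the same side of the bridge cut, the new edge $e_{uw}$ is intra-cluster and not cross-cutting, while the removal of $e_{uv}$ drops the cross-cutting count by one, strictly reducing $\varphi$ for that cut and hence $\Phi(G')<\Phi(G)$; (ii) if $u,w$ lie on opposite sides, then $e_{uw}$ is cross-cutting and exactly inherits the role of the removed $e_{uv}$, so the same cut attains the same ratio and $\Phi(G')\le\Phi(G)$.

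\textbf{Main obstacle.} The delicate point is to rule out that the exhibited bridge cut ceases to be the minimum in $G'$ in a direction that would artificially raise the conductance; the worry is that $v$'s degree drop spawns a small competing cut. Fortunately any such cut has a controlled ratio (for instance the singleton $\{v\}$ gives ratio $(k_v{-}1)/(k_v{-}1)=1$, which is harmless), and choosing $m$ large enough that intra-cluster degree sums dominate the $O(k_v)$ perturbation induced by the replacement guarantees that no alternative cut in $G'$ exceeds the bridge cut's ratio. With this domination in hand, the bridge cut certifies $\Phi(G')\le\Phi(G)$ uniformly over every $u,w\in N(v)$, completing the tightness claim for all $k_v\neq 3$.
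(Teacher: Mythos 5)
Your bridge-graph construction does not actually satisfy the universal quantifier in the statement, and the gap is in case (ii) of your $k_v\ge 4$ analysis. You observe that the minimum cut places $v$ on one side, say with $L$, and you split $v$'s neighbors across $L$ and $R$. Now take $u\in L$ (same side as $v$) and $w\in R$ (opposite side). Then $e_{uv}$ is \emph{not} cross-cutting, since $u$ and $v$ sit together in $L\cup\{v\}$, while the new edge $e_{uw}$ crosses the bridge cut. So the replacement adds a cross-cutting edge without removing one, and for $m$ large the bottleneck cut's ratio strictly \emph{increases} — the opposite of what you need. Your cases (i) and (ii) both tacitly assume that the removed edge $e_{uv}$ is cross-cutting, but that is only true when $u$ lies opposite $v$; the sub-case (``$u$ with $v$, $w$ opposite'') is the one that breaks the claim, and it cannot occur only if all of $v$'s neighbors are placed on $v$'s side of the cut (in which case you would need an independent bridge between $L$ and $R$ and then every replacement is harmlessly intra-cluster). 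There are also two smaller slips: the 4-cycle's conductance is $2/3$, not $1/2$ (though the conclusion that the replacement leaves it unchanged is still correct), and your claimed min cut $(L,\{v\}\cup R)$ has $\lceil k_v/2\rceil$, not $\lfloor k_v/2\rfloor$, cross-cutting edges; the actual minimizer for odd $k_v$ is $(L\cup\{v\},R)$.

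For comparison, the paper's proof takes a considerably lighter route. For $k_v\ge 4$ it does not try to defeat every pair $u,w$ from one witness graph; it merely observes that when $k_v\ge 4$ there can exist a graph and a pair $u,w\in N(v)$ for which \emph{both} $e_{uv}$ and $e_{vw}$ are simultaneously cross-cutting (impossible when $k_v=3$, which is exactly the source of Theorem~\ref{replace_propo}'s guarantee). In that situation $u$ and $w$ lie on the same side opposite $v$, so the replacement deletes a cross-cutting edge and adds a non-cross-cutting one, strictly decreasing conductance. In other words the paper interprets the ``$\forall u,w$'' loosely as ``for some choice of $u,w$'' — its own proof only exhibits one bad pair per $k_v\ge 4$, which is what is really needed to show the $k_v=3$ condition in Theorem~\ref{replace_propo} is tight. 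If you do want to prove the literal universal statement, the fix is to place all of $v$'s neighbors inside a single cluster $L$ (with an independent bridge edge elsewhere); then every replacement is internal to $L\cup\{v\}$ and leaves the bottleneck cut untouched.
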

%
%


\vspace{1mm}
\noindent \framebox[\columnwidth]{\parbox{0.9\columnwidth}{{\bf Running Example}: With Theorem \ref{replace_propo}, an example of the replacement operations one can perform over the bar-bell running example in Fig~\ref{doublelayer} is to replace $e_{ur}$ with $e_{rv}$, given that $u$ (after edge removals) has a degree of 3. Compared with the original conductance of $\Phi(G)$ = 0.018 and the post-removal conductance of $\Phi(G^{*})$ = 0.053, the conductance is now further increased to $\Phi(G^{**}) = 0.105$. The corresponding lower bound on mixing time is reduced to 416.6/1638.3 = 0.25 of the post-removal bound - a further reduction of 75\% - and 416.6/14212.3 = 0.029 of the original bound - an overall reduction of $97\%$.}}
\vspace{1mm}

\subsection{Extension}


If we know more about the user's neighbors, especially the common neighbors of the user and the random walk's next candidate, we will deterministically identify more non-cross-cutting edges. When the random walk reaches the nodes we have accessed before, we can use their degree information without issuing extra web requests since we could retrieve data from our local database. 

Fig \ref{delete_fig_propo3} (a) shows an example that with the extra degree knowledge of $u$ and $v$'s common neighbor $w$, $e_{uv}$ must be a non-cross-cutting edge. As $k_w=3$, if we assume $e_{uv}$ is a cross-cutting edges, then there must be 3 cross-cutting edges between $u$ and $v$. However, there exists another configuration Fig \ref{delete_fig_propo3} (b), which only has 2 cross-cutting edges. Thus, it contradicts the assumption that $e_{uv}$ is a cross-cutting edge. Noticed that if we do not know the degree of $w$, we could not deterministically identify $e_{uv}$ since theorem \ref{delete_propo1} does not apply here.
\begin{figure}
    \centering
    \includegraphics[width=2.7in]{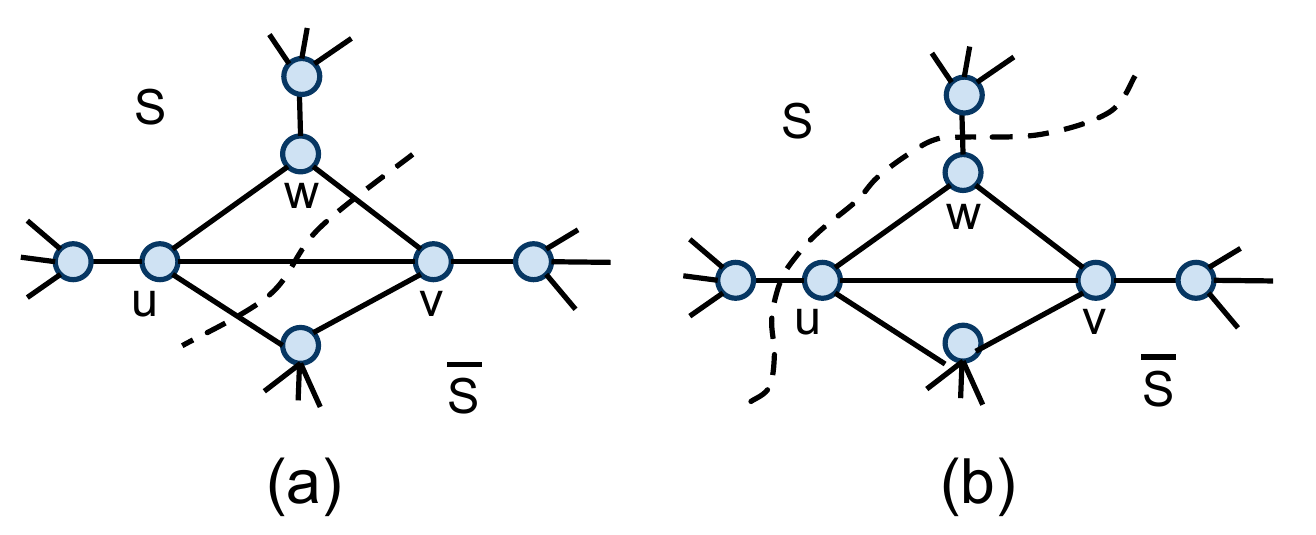}
    \caption{A demo shows that $e_{uv}$ cannot be a cross-cutting edge in theorem \ref{delete_propo3}.}
    \label{delete_fig_propo3}
\end{figure}

\begin{theorem}
Given $G(V, E)$, $\forall u, v \in V$, if $e_{uv}\in E$ and
\begin{equation}
\left\lceil\frac{|N(u)\cap N(v)|-N^*}{2}\right\rceil + 1 + \frac{1}{2}\sum_{w\in N^*}(4-k_w) > \frac{1}{2}\max\{k_u,k_v\},
\label{delete_equ3}
\end{equation}
we can assert that $e_{uv}$ is not a cross-cutting edge. Here we denote $N^* = \{w\in N(u)\cap N(v) | \,\,k_w\text{ is known }, 2\leq k_w\leq 3 \}$.
\label{delete_propo3}
\end{theorem}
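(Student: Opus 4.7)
The plan is to adapt the contradiction argument of Theorem~\ref{delete_propo1} by exploiting the known small degrees in $N^*$ to perform a stronger ``multi-vertex drag'' rather than moving only a single endpoint. Suppose for contradiction that $e_{uv}$ is cross-cutting in an optimum cut $(S,\bar S)$ with $u\in S$ and $v\in\bar S$, and retain the standing assumption (inherited from the proof of Theorem~\ref{delete_propo1}) that a strict decrease in the cross-cutting count implies a strict decrease in $\Phi(G)$.

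The key tool is the set-move identity: for any subset $A\subseteq S$ moved to $\bar S$, the change in cross-cutting count is
\[
\Delta(A)\;=\;\sum_{a\in A}\bigl(k_a-2\,\mathrm{cr}(a)\bigr)\;-\;2\,e(A,A),
\]
where $\mathrm{cr}(a)$ denotes the number of cross-cutting edges incident to $a$ and $e(A,A)$ counts edges internal to $A$. I will evaluate $\Delta$ at the two candidate drags $A_u=\{u\}\cup(N^*\cap S)$ and, symmetrically, $A_v=\{v\}\cup(N^*\cap\bar S)$, and show that hypothesis~(\ref{delete_equ3}) forces at least one of $\Delta(A_u),\Delta(A_v)$ to be strictly negative, contradicting the optimality of $(S,\bar S)$.

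First, apply the pigeonhole bound from Theorem~\ref{delete_propo1} to the common neighbors \emph{outside} $N^*$: at least $\lceil(n-|N^*|)/2\rceil$ of them lie on the same side, WLOG in $\bar S$, so dragging $u$ immediately eliminates that many cross-cutting edges along the length-two paths. Next, for each $w\in N^*\cap S$ included in $A_u$, a short case analysis on $k_w$ yields the extra saving $4-k_w$: when $k_w=2$ the path edges $(u,w),(w,v)$ contribute a clean $-2$ to $\Delta$ because $w$ has no third neighbor; when $k_w=3$ the worst-case placement of $w$'s third neighbor contributes at most $+1$ after the flip, still netting $-1=-(4-3)$. Combining this with $\mathrm{cr}(u)\ge 1+n_u$, $\mathrm{cr}(w)\ge 1$ for $w\in N^*\cap S$, and $e(A_u,A_u)\ge|N^*\cap S|$ yields the worst-case estimate
\[
\Delta(A_u)\;\le\;k_u\;-\;2\Bigl\lceil\tfrac{n-|N^*|}{2}\Bigr\rceil\;-\;2\;-\;\sum_{w\in N^*}(4-k_w),
\]
and symmetrically for $\Delta(A_v)$ with $k_v$ in place of $k_u$. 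Since the pigeonhole direction is chosen post hoc, the relevant upper bound is in terms of $\max\{k_u,k_v\}$, and the strict inequality~(\ref{delete_equ3}) immediately gives $\Delta<0$.

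The main obstacle I anticipate is the $k_w=3$ sub-case: adversarial placement of $w$'s single extra neighbor threatens to inflate $\mathrm{cr}(w)$ in a way that appears to cancel the savings, so the accounting must carefully separate the $-2$ contribution of $(u,w),(w,v)$ from the at-most-$+1$ flip of the third edge $(w,x)$ and verify that the former dominates in every configuration. A secondary subtlety is ensuring consistency between the pigeonhole on $N(u)\cap N(v)\setminus N^*$ and the drag-direction choice --- concretely, one must drag whichever endpoint of $e_{uv}$ has the majority of unknown-degree common neighbors lying on the opposite side and then attach the matching portion of $N^*$ to the drag set, which is precisely what the $\max\{k_u,k_v\}$ on the right-hand side of~(\ref{delete_equ3}) captures after halving.
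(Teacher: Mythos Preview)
Your proposal is correct and follows essentially the same strategy as the paper's proof: assume $e_{uv}$ is cross-cutting, then drag $u$ (or symmetrically $v$) together with the low-degree common neighbours in $N^*$ to the opposite side and show the cut size strictly decreases. The paper organises the bookkeeping via the outgoing-edge count $|O_{N^*\cup\{u\}}|=k_u+\sum_{w\in N^*}k_w-2|N^*|$ and then checks that more than half of those edges flip from cross-cutting to non-cross-cutting, whereas you encode the same computation through the set-move identity $\Delta(A)=\sum_{a\in A}(k_a-2\,\mathrm{cr}(a))-2e(A,A)$; these are equivalent reformulations of the same multi-vertex flip, and both reduce to exactly the inequality~(\ref{delete_equ3}) after rearrangement.
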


Intuitively, the edge between two nodes which have many common neighbors has higher probability to be a non-cross-cutting edge. Also, it is easy for us to find these edges in online social networks. If a friend knows almost every other friends of a person, then this edge may be considered as non-cross-cutting edge according to theorem \ref{delete_propo1} and \ref{delete_propo3}.

\section{Algorithm MTO-Sampler}

\subsection{Algorithm implementation}
\begin{figure*}
    \centering
        \subfigure[Modified overlay graph $G^*$]{
            \label{fig:modified}
            \includegraphics[width=3in]{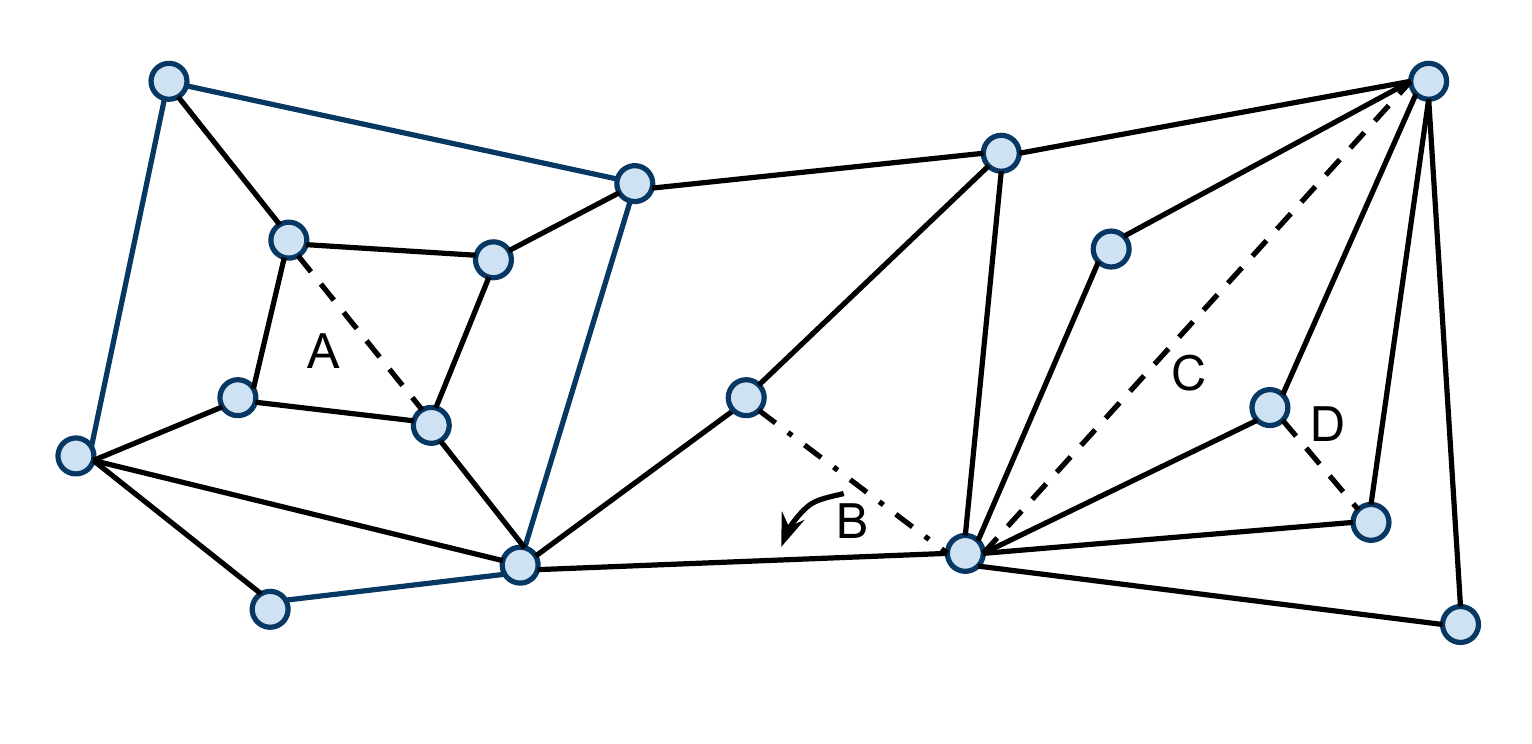}
        }
        \subfigure[Carry out the random walk by modify the topology on-the-fly. It is identical to the random walk in overlay graph $G^*$.]{
            \label{fig:mto}
            \includegraphics[width=3in]{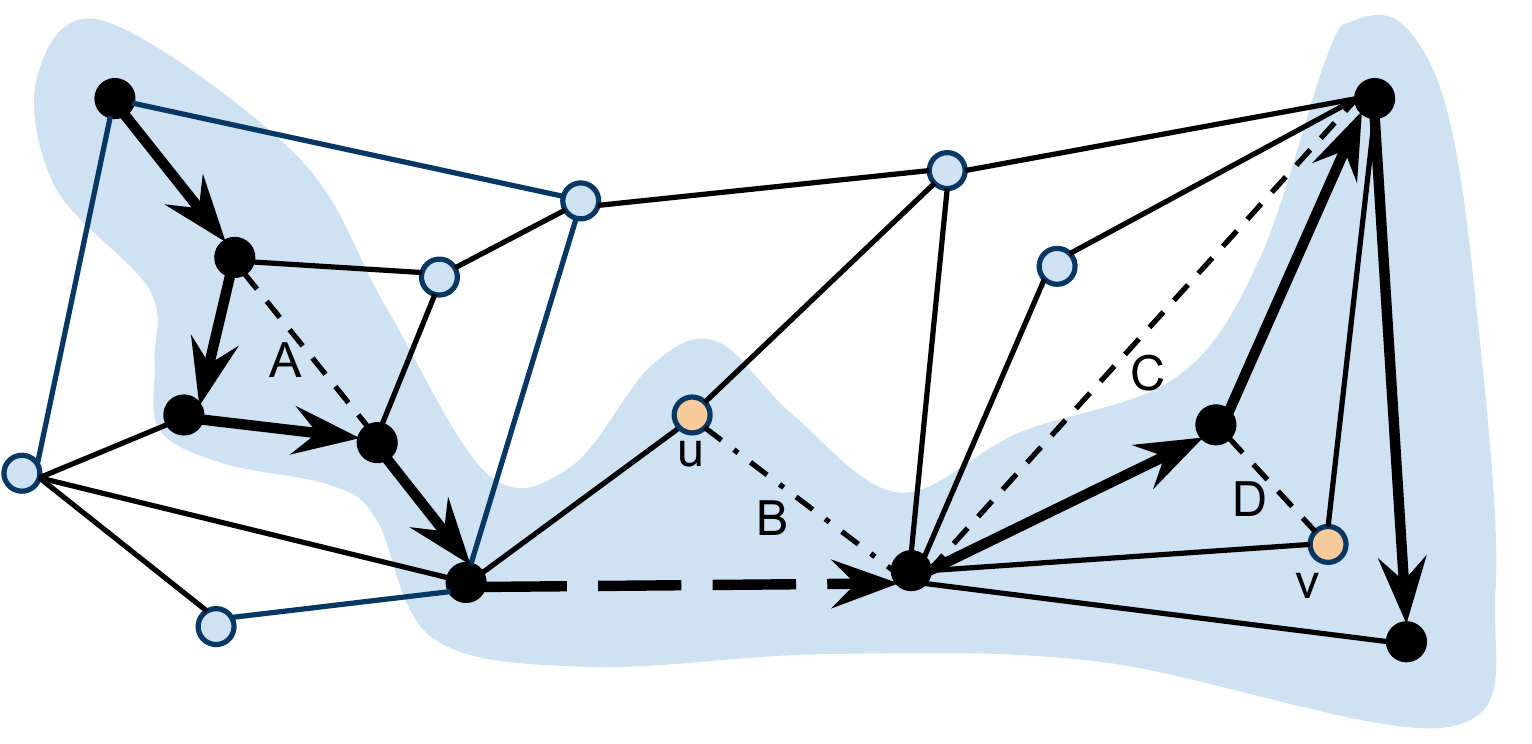}
        }
    \caption{A demo shows how the MTO-Sampler modifies the topology of the graph on-the-fly.}
    \label{demo}
\end{figure*}
\noindent\textbf{Algorithm description.}
To explain how the on-the-fly modification works, we demonstrate an example in Fig \ref{demo}. Fig \ref{fig:modified} is an overlay graph $G^*$ that has been modified according to former theorems, in which edges A, C and D are removed, and edge B is replaced. Fig \ref{fig:mto} shows one possible track of how our MTO-sampler change the simple random walk. For instance, when the random walk sees a node $u$, and $k_u=3$ (it satisfies the condition of replacement), then it may replace an edge as we described in theorem \ref{replace_propo}. The colored area contains all the nodes that the random walk visits. 

Algorithm 1 depicts the detailed procedure of MTO sampler, and the stopping rule (which indicates that the random walk should stop and output samples) can be any convergence monitor used in Markov Chain.

\begin{algorithm}
\label{al}
    \caption{MTO-Sampler for Simple Random Walk}
    \begin{algorithmic}
	        \FOR{$i = 1 \to sample\_size$}
	        \STATE Starting from vertex $u$
	        \WHILE{!(Stopping rule)}
	        \WHILE{$|N(u)|\geq 1$}
	        \STATE Uniformly pick a neighbor $v$, and issue a query
	        \IF{$e_{uv}$ is removable}
	        \STATE $N(u) \gets N(u) - \{v\}$
	        \STATE continue
	        \ELSIF{$k_v==3$}
	        \STATE /* One of $v$'s edge can be replaced*/
			\IF{choose to replace $e_{uv}$}
			\STATE $v \gets v'$
			\ELSE
			\STATE $N(u) \gets N(u) \cup \{v'\}$
			\STATE choose $u \gets v$ or $u \gets v'$ randomly
			\STATE break
	        \ENDIF
	        \ENDIF
	        \IF{rand$(0,1) < 1/2$ }
	        \STATE $u \gets v$
	        \STATE break
			\ELSE
			\STATE continue	        				        
			\ENDIF	        
	        \ENDWHILE
	        \ENDWHILE
	        \STATE Record sample $x_i \gets u$
	        \ENDFOR	        
	\end{algorithmic}
	\label{algorithm1}
\end{algorithm}



\noindent\textbf{Aggregate estimation and probability revision.}
After collecting samples, we use {\em Importance Sampling} to directly estimate the aggregate information through the samples from the random walk's stationary distribution $\tau$. 
    \begin{algorithm}
        \begin{algorithmic}
        \STATE \textbf{Importance Sampling: }
            \FOR{$i=1$ to Sample\_Size $N$}
            \STATE $x_i \gets$ sampling from $\tau$ 
			\STATE $w(x_i) \gets \frac{\hat{\pi}(x_i)}{\hat{\tau}(x_i)}$
			\STATE record $f(x_i)$ /*Aggregate Function $f(\cdot)$ */			
            \ENDFOR
		\STATE Output estimation $A(f(X)) = \frac{\frac{1}{N}\sum_{i=1}^Nf(x_i)w(x_i)}{\frac{1}{N}\sum_{i=1}^Nw(x_i)}$        
        \end{algorithmic}
    \end{algorithm}    

The key challenge for MTO-Sampler using {\em importance sampling} is to estimate the stationary distribution of MTO-Sampler random walk $\tau^*$. Since MTO-Sampler modifies the topology, $\tau^*$ may not equal to the stationary distribution $\tau$. Here we have
\begin{equation}\tau^*(u) = \frac{k_u^*}{2|E^*|}.\end{equation}
$k_u^*$ is unknown in overlay graph $G^*$, but we can draw simple random sample from $u$'s neighbors in $G^*$ to get an unbiased estimation of $k_u^*$.

\subsection{Theoretical Model Analysis}


In order to theoretically analysis the performance of MTO-Sampler, we introduce a well known graph generation model: {\em Latent space model}.

\noindent\textbf{Latent space model}. 
Latent space graph model \cite{Sarkar_theoreticaljustification} are connecting two nodes with the probability related to their distance in the latent space.
\begin{equation}P(i\sim j|d_{ij}) = \frac{1}{1+e^{\alpha(d_{ij}-r)}}, \end{equation}
here $d_{ij}$ is the distance between two nodes $i$ and $j$; $r$ controls the level of sociability of a node in this graph, and $\alpha$ is the sharpness of the function. 

We will show that in the following theorem if two nodes' distance is smaller than a threshold $d_0$, then it is likely to be an non-cross-cutting edge. Therefore, after finding the expected number of edges that can be removed we can calculate the increment of the conductance.

\begin{theorem}
Given a latent space graph model $G(V,E)$, assume $\alpha=+\infty$, then the expected number of edges we can removed
\begin{equation}
\mathbb{E}[R] \geq |E|\cdot\mathbb{P}\left(d<V(r)\left(1-\left(\frac{1}{3}\right)^{1/D}\right)\right),
\label{eq:E-R}
\end{equation}
here V(r) is the volume of a hypersphere with radius $r$ in $D$ dimensional latent space. The proof can be found at \cite{technical_report}. 
\end{theorem}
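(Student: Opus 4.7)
The plan is to derive a distance-only sufficient condition for edges to satisfy the removal criterion of Theorem~\ref{delete_propo1}, and then aggregate by linearity of expectation. First, at $\alpha=+\infty$ the sigmoid connection probability $1/(1+e^{\alpha(d_{ij}-r)})$ degenerates into the indicator $\mathbf{1}[d_{ij}\leq r]$, so the latent-space model becomes a random geometric graph on the latent points. Consequently, for any edge $e_{uv}$ we have $N(u)=\{w:d_{uw}\leq r\}\setminus\{u\}$ and $N(u)\cap N(v) = \bigl(B(u,r)\cap B(v,r)\bigr)\setminus\{u,v\}$, where $B(x,r)$ denotes the $r$-ball in the $D$-dimensional latent space around $x$. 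Both the common neighborhood count and the endpoint degrees are therefore entirely determined by the number of latent points in these two balls.

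Next I would apply Theorem~\ref{delete_propo1} inside an expectation. By linearity,
\[
\mathbb{E}[R] \;=\; \sum_{e_{uv}\in E}\mathbb{P}\bigl(\text{inequality~(\ref{delete_equ1}) holds for }e_{uv}\bigr),
\]
so it suffices to supply a distance-only sufficient condition on $d_{uv}$ and lower-bound each summand by the probability of that condition. To find such a condition, I would begin with the triangle-inequality containment $B(u,r-d)\subseteq B(u,r)\cap B(v,r)$, which yields $\mathrm{vol}(B(u,r)\cap B(v,r))\geq V(r-d)$ whenever $d=d_{uv}\leq r$. Choosing $d$ small enough that $V(r-d)\geq V(r)/3$ --- equivalently $(r-d)^D\geq r^D/3$ --- produces exactly the threshold $d<r\bigl(1-(1/3)^{1/D}\bigr)$ that appears in~(\ref{eq:E-R}) (reading the $V(r)$ prefactor there as the scalar radius).

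The remaining step, and the hardest one, is to show that this volumetric bound forces the integer inequality $\left\lceil\frac{|N(u)\cap N(v)|}{2}\right\rceil + 1 > \frac{1}{2}\max\{k_u,k_v\}$ of Theorem~\ref{delete_propo1}. I expect to handle this via a concentration estimate (Chernoff- or Poisson-tail-type) on the number of latent points lying in the symmetric-difference regions $B(u,r)\setminus B(v,r)$ and $B(v,r)\setminus B(u,r)$: whenever the lens-intersection volume dominates a fixed fraction of $V(r)$ these regions have tightly controlled volume, so they contain --- with high probability --- few enough latent points that the count of common neighbors dominates the excess degree by the required margin. Once this implication is in hand, plugging the sufficient condition back into the linearity display above --- and noting that under i.i.d.\ latent placement the endpoint distance of every edge has the same distribution --- yields the claimed lower bound on $\mathbb{E}[R]$.
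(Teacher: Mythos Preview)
Your route diverges from the paper's, and the divergence introduces a real gap.

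The paper does not use the ball containment $B(u,r-d)\subseteq B(u,r)\cap B(v,r)$, nor any concentration argument. It instead invokes a cited inequality of Sarkar,
\[
V(r)\Bigl(1-\tfrac{d_{ij}}{2r}\Bigr)^{D}\;\leq\;\frac{|N(i)\cap N(j)|}{|N(i)\cup N(j)|},
\]
which controls the \emph{Jaccard ratio} of the two neighborhoods directly in terms of the latent distance. It then replaces the raw criterion of Theorem~\ref{delete_propo1} by the cleaner sufficient condition $|N(i)\cap N(j)|\geq|N(i)\cup N(j)|-2$ (i.e.\ $u$ and $v$ share every neighbor except each other, which forces $k_u=k_v=|N(u)\cap N(v)|+1$ and hence~(\ref{delete_equ1})), plugs in $|N(i)\cup N(j)|\geq 3$, and solves the resulting inequality $V(r)(1-d/(2r))^{D}\geq 1-2/3=1/3$ for $d$. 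The $1/3$ in the statement is thus a threshold on the Jaccard ratio, not on a raw volume fraction.

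Your gap is exactly this distinction. The condition $V(r-d)\geq V(r)/3$ says only that the lens $B(u,r)\cap B(v,r)$ occupies at least one third of a single ball, so under uniform latent placement one gets $\mathbb{E}\bigl[|N(u)\cap N(v)|\bigr]\gtrsim\tfrac{1}{3}\,\mathbb{E}[k_u]$. But Theorem~\ref{delete_propo1} requires $|N(u)\cap N(v)|>\max\{k_u,k_v\}-2$, i.e.\ essentially \emph{all} neighbors shared, not a third of them. The expected counts themselves miss the target by a constant factor, and Chernoff- or Poisson-type tail bounds only control fluctuations around the mean; they cannot relocate the mean. To make a pure ball-containment argument reach Theorem~\ref{delete_propo1} you would need the symmetric-difference volume $\mathrm{vol}\bigl(B(u,r)\triangle B(v,r)\bigr)$ to be of order $V(r)/k_u$, which forces a much smaller distance threshold than the one in~(\ref{eq:E-R}). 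The overall skeleton you set up (linearity of expectation plus a distance-only sufficient condition) is correct and matches the paper; it is the specific sufficient condition, and the absence of the Jaccard-ratio lemma, that fails.
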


Simple simulations show that from 20000 points experiment, one can get the empirical distribution of point-wise distance. More specifically, If we let $r=0.7$, $a=4$ and $b=5$, $D=2$, then 
\begin{align}
\mathbb{E}[\Phi(G^*)] \geq 1.052\Phi(G)
\label{eq:ip_con}
\end{align}
We compared the experimental results together with this theoretical bound of latent space model in section \ref{sec:theoretical_exp}.

\section{Experiments}
\subsection{Experimental Setup}

\subsubsection{Hardware and Platform}
We conducted all experiments on a computer with Intel Core i3 2.27GHz CPU, 4GB RAM and 64bit Ubuntu operating system. All algorithms were implemented in Python 2.7. Our local, synthetic and online datasets are stored in the in-memory Redis database and the MongoDB database.


\begin{table}
\centering
\begin{tabular}{|r|r|r|r|r|r|r|r|r|r|r|r|r|r|}
\hline  Dataset & \#nodes & \#edges & 90\% diameter \\
\hline  Epinions\cite{Richardson2003} & 26588 & 100120 & 4.8\\
\hline  Slashdot A\cite{Leskovec2009} & 70068 & 428714 & 4.5\\
\hline  Slashdot B\cite{Leskovec2009} & 70999 & 436453 & 4.5\\
\hline
\end{tabular}
\caption{Local Datasets}
\label{tab:lsn}
\end{table}

\subsubsection{Datasets} \label{sec:epd}
We tested three types of datasets in the experiments: local real-world social networks, Google Plus online social network, and synthetic social networks - which we describe respectively as follows.

\vspace{2mm}
\noindent{\bf Local Datasets:} The local social networks - i.e., real-world social networks for which the entire topology is downloaded and stored locally in our server. For these datasets, we simulated the individual-user-query-only web interface strictly according to the definition in Section 1, and ran our algorithms over the simulated interface. The rationale behind using such local datasets is so as we have the ground truth (e.g., real aggregate query answers over the entire network) to compare against for evaluating the performance of our algorithms.

Table~\ref{tab:lsn} shows the list of local social networks we tested with (collected from \cite{stanford_dataset}). All three datasets are previously-captured topological snapshots of Epinions and Slashdot, two real-world online social networks. Since we focus on sampling undirected graphs in this paper, for a real-world directed graph (e.g., Epinions), we first convert it to an undirected one by only keeping edges that appear in both directions in the original directed graph. Note by following this conversion strategy, we guarantee that a random walk over the undirected graph can also be performed over the original directed graph, with an additional step of verifying the inverse edge (resp.~$v \to u$) before committing to an edge (resp.~$u \to v$) in the random walk. The number of edges and the 90\% effective diameter reported in Table~\ref{tab:lsn} represent values after conversion.

\vspace{2mm}
\noindent{\bf Google Plus Online Social Graph:} We also tested a second type of dataset: remote, online, social networks for which we have no access to the ground truth. In particular, we chose the Google Plus\footnote{https://plus.google.com/} network because its API\footnote{The source code of its Python wrapper can be found at https://github.com/pct/python-googleplusapi. After April 20, 2012, this social graph api will be fully retired.} is the most generous among what we tested in terms of the number of accesses allowed per IP address per day. Using random walk and MTO-Sampler random walk, we have accessed 240,276 users in Google Plus. We observed that the interface provided by Google Social Graph API strictly adheres to our model of an individual-user-query-only web interface, in that each API request returns the local neighborhood of one user. We also collected the data of users' self-description.

\vspace{2mm}
\noindent{\bf Synthetic Social Networks:} One can see that, for the real-world social network described above, we cannot change graph parameters such as size, connectivity, etc, and observe the corresponding performance change of our algorithms. To do so, we also tested synthetic social networks which were generated according to theoretical models. In particular, we tested the latent space model.

We note that, since the effectiveness of these theoretical models are still under research/debate, we tested these synthetic social networks for the sole purpose of observing the potential change of performance for social networks with different characteristics. The superiority of our algorithm over simple random walk, on the other hand, is tested by our experiments on the two types of real-world social networks.

\subsubsection{Algorithms Implementation and Evaluation} \label{sec:eps}

\noindent{\bf Algorithms:} We tested four algorithms, the simple random walk (i.e., baseline), Metropolis Hastings Random Walk (MHRW), Random Jump (RJ) and our MTO-Sampler,  and compared their performance over all of the above-described datasets. 

\vspace{2mm}
\noindent{\bf Input Parameters:} Both simple random walk and our MTO-sampler are parameter-less algorithms with one exception: They both need a {\em convergence indicator} to determine when the random walk has reached (or become sufficiently close to) the stationary distribution - so a sample can be retrieved from it. In the experiments, we used the Geweke indicator \cite{Geweke92evaluatingthe}, one of the most popularly used methods in the literature, which we briefly explain as follows.

Given a sequence of nodes retrieved by a random walk, the Geweke method determines whether the random walk reaches the stationary distribution after a burn-in of $k$ steps by first constructing two ``windows'' of nodes: Window A is formed by the first 10\% nodes retrieved by the random walk after the $k$-step burn-in period, and Window B formed by the last 50\%. One can see that, if the random walk indeed converges to the stationary distribution after burn-in, then the two windows should be statistically indistinguishable. This is exactly how the Geweke indicator tests convergence. In particular, consider any attribute $\theta$ that can be retrieved for each node in the network (a commonly used one is degree that applies to every graph). Let
\begin{align}
Z = \left|\frac{\bar{\theta}_A -\bar{\theta}_B }{\sqrt{\hat{S}_\theta^A + \hat{S}_\theta^B }}\right|,
\end{align}
where $\bar{\theta}_A$ and $\bar{\theta}_B$ are means of $\theta$ for all nodes in Windows $A$ and $B$, respectively, and $S_\theta^A$ and $S_\theta^B$ are their corresponding variances. One can see that $Z \to 0$ when the random walk converges to the stationary distribution. Thus, the Geweke indicator confirms convergence if $Z$ falls below a threshold. In the experiments, we set the threshold to be $Z \leq 0.1$ by default, while also performing tests with the threshold ranging from $0.01$ to $1$.

\vspace{2mm}
\noindent{\bf Performance Measures for Sampling:} As mentioned in Section~\ref{pf_measure}, a sampling technique for online social networks should be measured by {\em query cost} and {\em bias} - i.e., the distance between the (ideal) stationary distribution (i.e., $p(v) = deg(v)/\sum_v deg(v)$ for a simple random walk) and the actual probability distribution for each node to be sampled. To measure the query cost, we simply used the number of unique queries issued by the sampler. Bias, on the other hand, is more difficult to measure, as shown in the following discussions.

For a small graph, we measured bias by running the sampler for an extremely long amount of time (long enough so that each node is sampled multiple times). We then estimated the sampling distribution by counting the number of times each node is retrieved, and compared this distribution with the ideal one to derive the bias. In particular, we measured bias as the KL-divergence between the two distributions, specifically $D_\mathrm{KL}(P || P_\mathrm{sam}) + D_\mathrm{KL}(P_\mathrm{sam} || P)$, where $P$ and $P_\mathrm{sam}$ are the ideal distribution and the (measured) sampling distribution, respectively.

For a larger graph, one may need a prohibitively large number of queries to sample each node multiple times. To measure bias in this case, we use the collected samples to estimate aggregate query answers over all nodes in the graph, and then compare the estimation with the ground truth. One can see that, a sampler with a smaller bias tends to produce an estimation with lower relative error. Specifically, for the local social networks, we used the average degree as the aggregate query (as only topological information is available for these networks). For the Google Social Graph experiment, we tested various aggregate queries including the average degree and the average length of user self-description.

Finally, to verify the theoretical results derived in the paper, we also tested a theoretical measure: the mixing time of the graph. In particular, we continuously ran our MTO-Sampler until it hits each node at least once - so we could actually obtain the topology of the overlay graph (e.g., as in Fig~\ref{doublelayer}). Then, we computed the mixing time of the overlay graph (from the Second-Largest Eigenvalue Modulus (SLEM) of its adjacency matrix\footnote{Typical theoretical mixing time of Simple Random Walk can be defined as $\Theta(1/\log(1/\mu))$, where $\mu$ is SLEM of transition matrix $P$.}, see \cite{Boyd2005}). We would like to caution that, while we used it to verify our theoretical results of MTO-Sampler never decreasing the conductance of a graph, this theoretically computed measure does not replace the above-described bias vs.~query cost tests because it is often sensitive to a small number of ``badly-connected'' nodes (which may not cause significant bias for practical purposes).

%
%



\subsection{Performance Comparison Between Simple Random Walk and MTO-Sampler}

\begin{figure*}[!htb]
    \begin{center}
	    \subfigure[Slashdot A]{
	        \label{fig:slashdotA}
	        \includegraphics[width=0.31\linewidth]{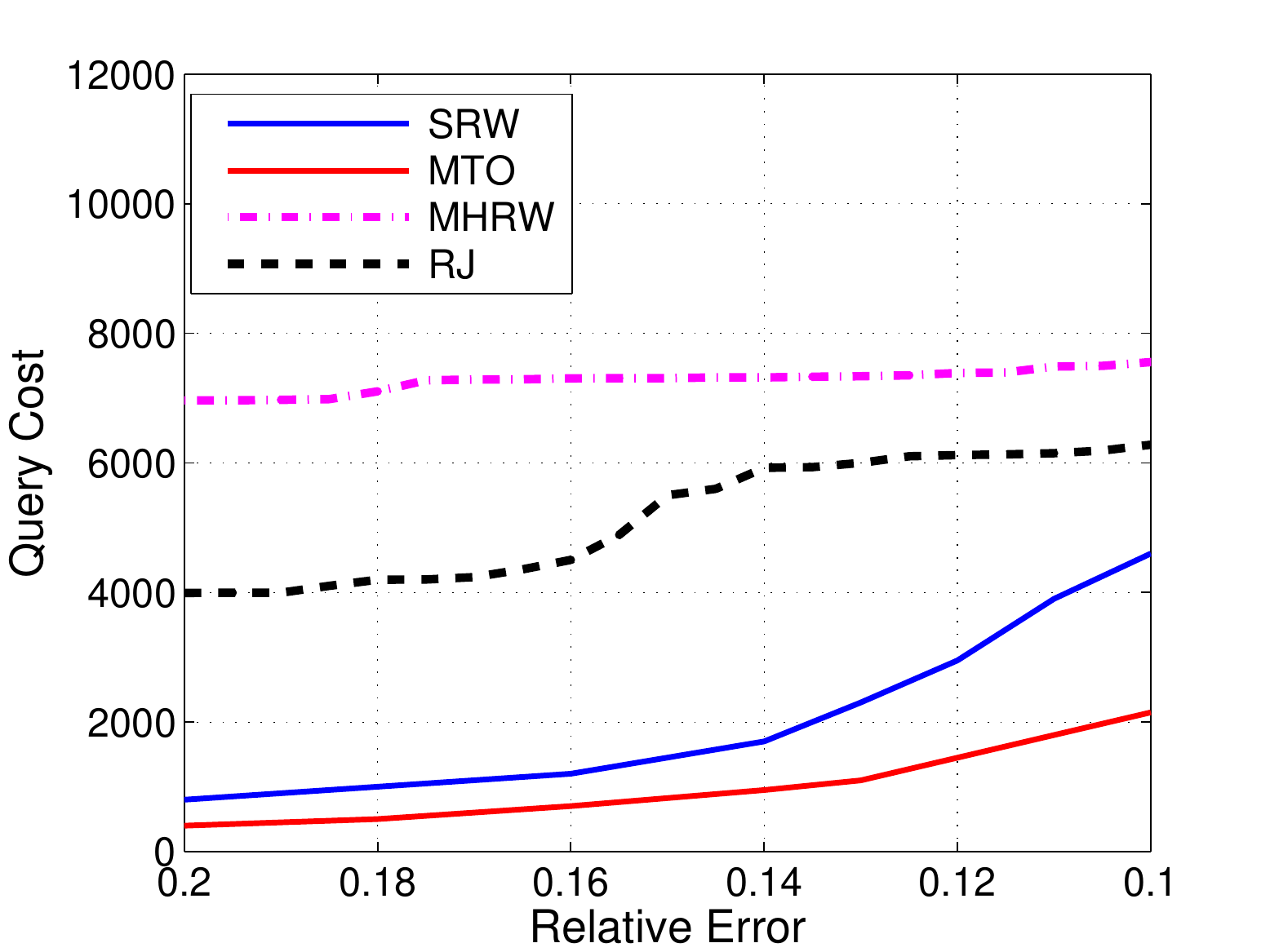}
        }
        \subfigure[Slashdot B]{
            \label{fig:slashdotB}
            \includegraphics[width=0.31\linewidth]{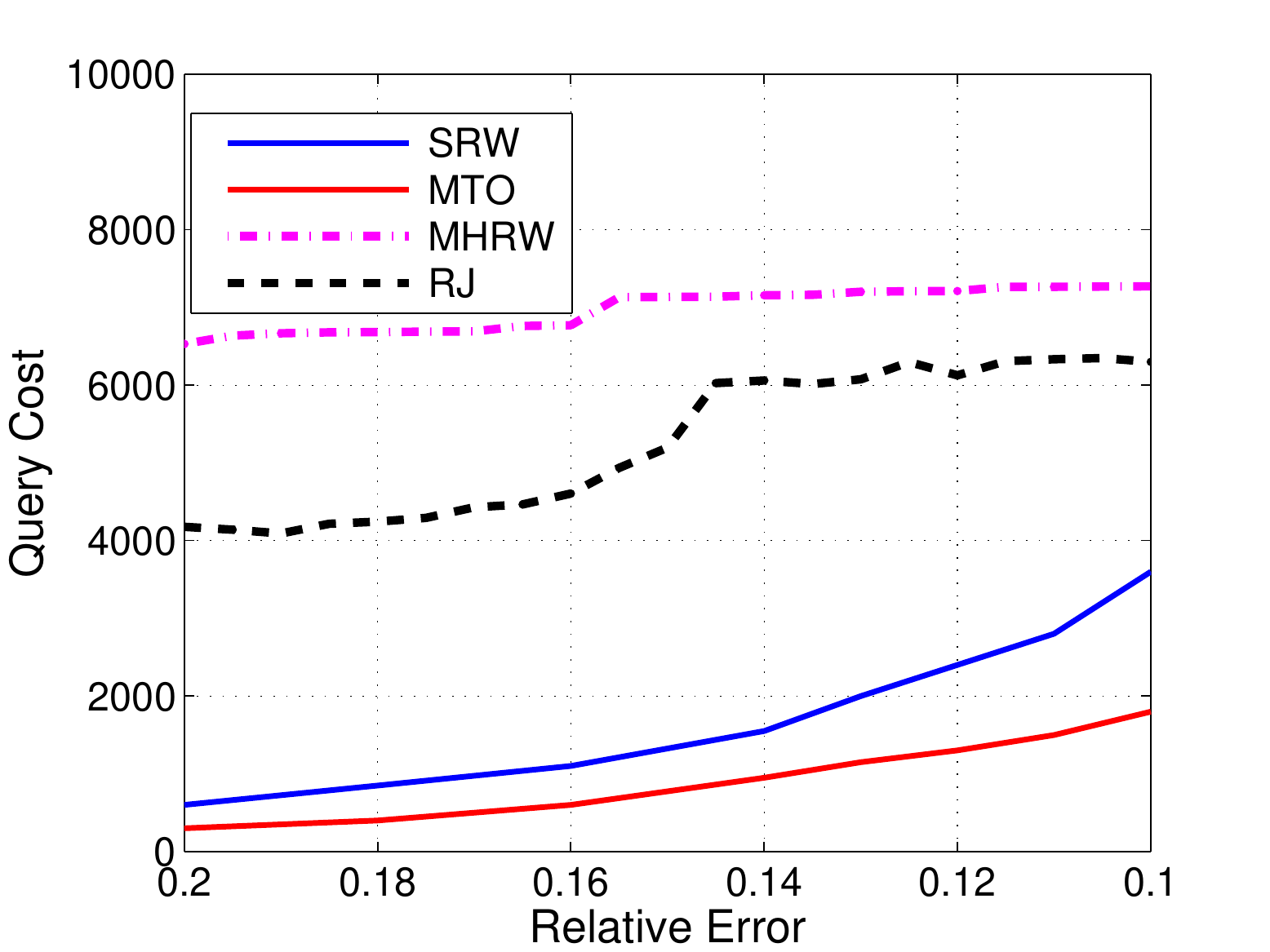}
        }
	    \subfigure[Epinions]{
	        \label{fig:epinion}
	        \includegraphics[width=0.325\linewidth]{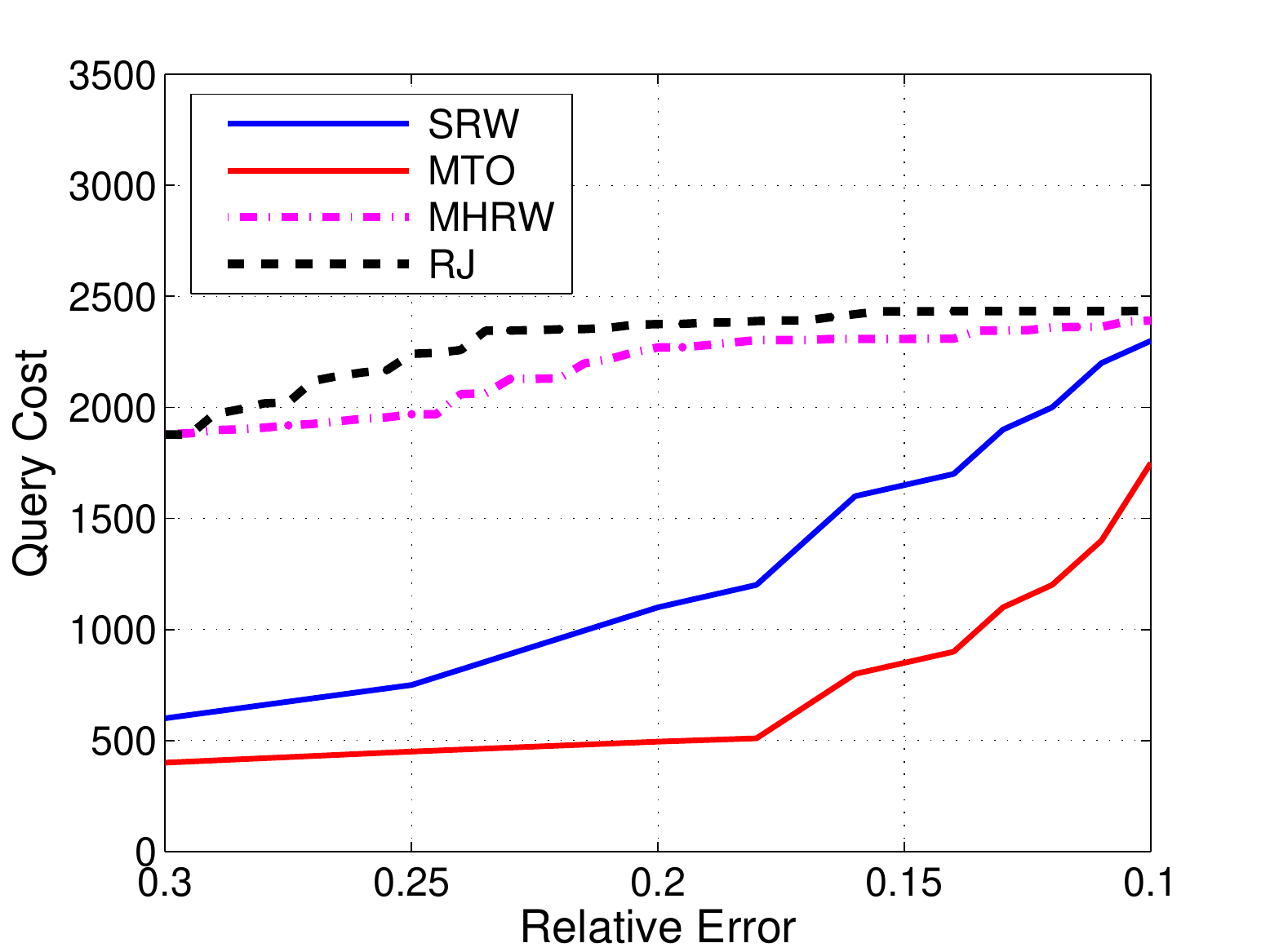}
        }
    \end{center}
    \caption{Bias vs. Query Cost tests for local datasets' average degree.}        
    \label{fig:guaranteed_qcost}
\end{figure*}

We started by comparing the performance of Simple Random Walk (SRW) and MTO-Sampler over real-world social networks using all three performance measures described above - KL-divergence, relative error vs.~query cost, and theoretical mixing time.

\vspace{2mm}
\noindent{\bf Local Datasets:} We started by testing the relative error vs.~query cost tradeoff of SRW, MTO, MHRW and RJ for estimating aggregate query answers. Since only topological information is available for local datasets, we used the average degree as the aggregate query. Fig~\ref{fig:guaranteed_qcost} depicts the performance comparison for the three real-world social networks. Here each point represents the average of 20 runs of each algorithm, and the query cost (i.e., y-axis) represents the maximum query cost for a random walk to generate an estimation with relative error above a given value (i.e., x-axis). For random jump in the experiments, we set the probability of jumping as 0.5. One can see that, for all three datasets, our MTO-Sampler achieves a significant reduction of query cost compared with the SRW sampler, MHRW sampler and Random Jump sampler.

\begin{figure*}
    \minipage{0.32\textwidth}
    \includegraphics[width=\linewidth]{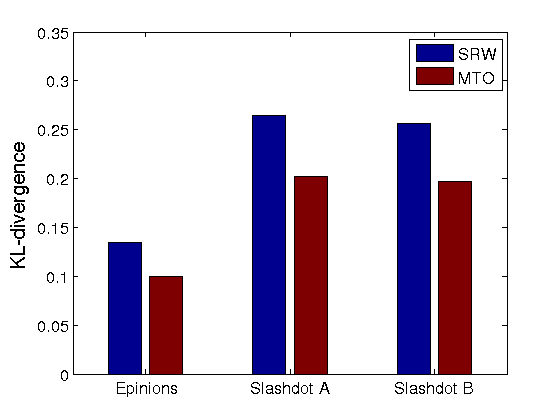}
    \caption{Comparison between SRW and MTO on query cost and the Kullback–Leibler divergence measure defined in Section~\ref{sec:eps} over all three datasets.}
    \label{fig:kl}
    \endminipage\hfill
    \minipage{0.32\textwidth}
    \includegraphics[width=\linewidth]{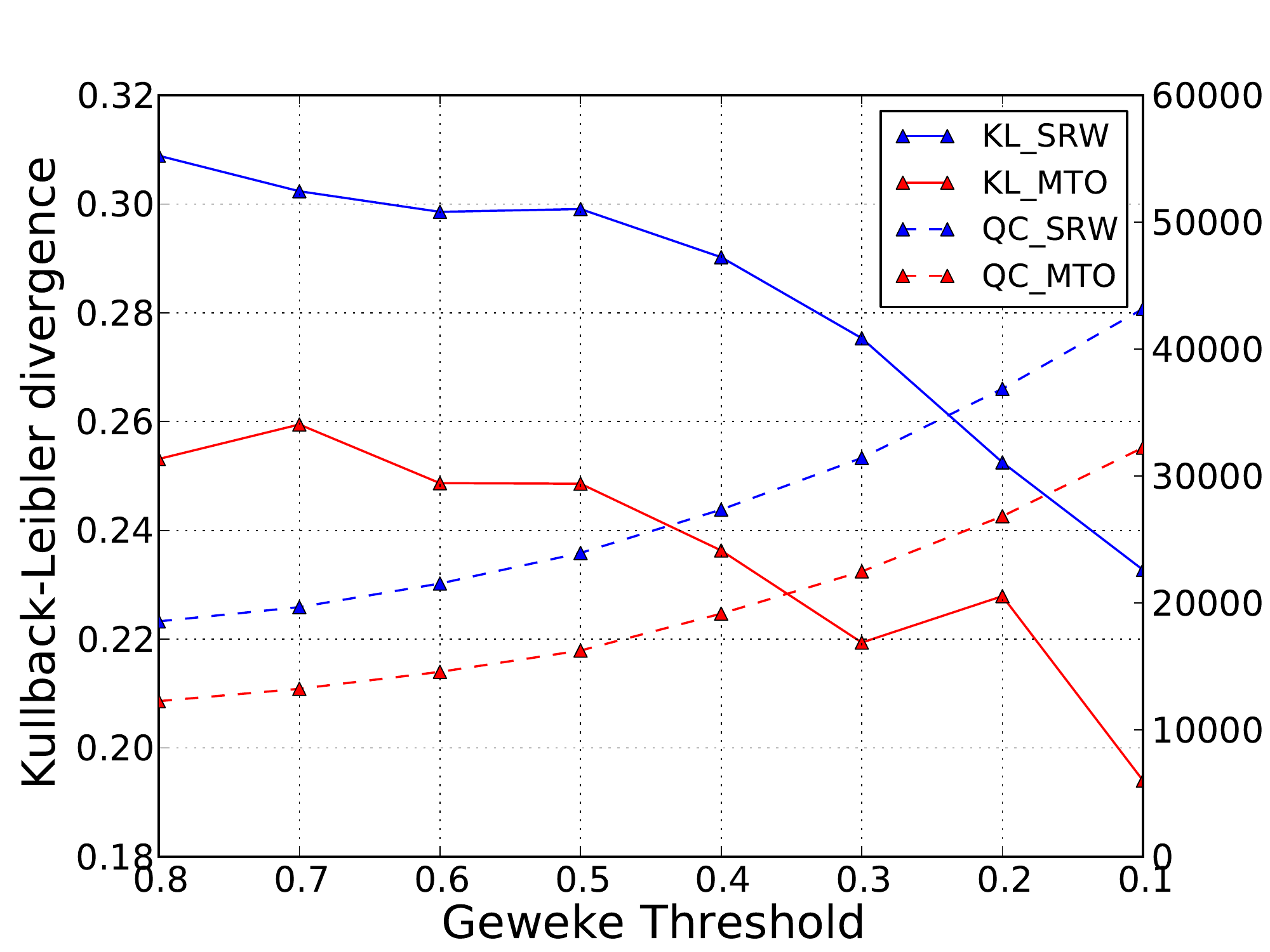}
    \caption{Varying Geweke Threshold to get different KL divergence on dataset Slashdot B. KL and QC stands for KL divergence and Query Cost respectively.}
    \label{fig:geweke_kl}
    \endminipage\hfill
    \minipage{0.32\textwidth}
    \includegraphics[width=\linewidth]{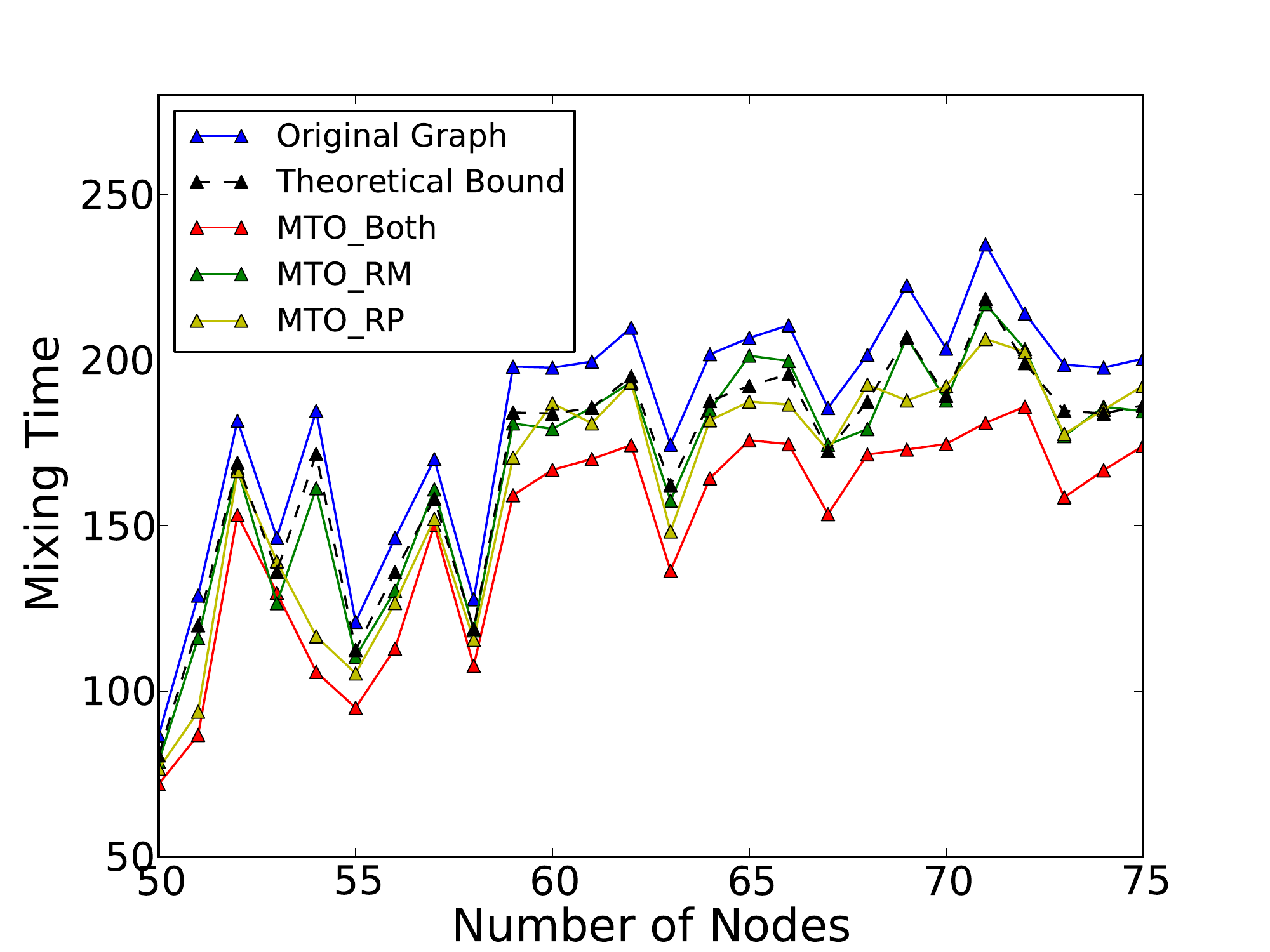}
    \caption{Comparison of theoretical mixing time on latent space graph model. MTO\_Both: Remove and replace edges. MTO\_RM: Only remove edges. MTO\_RP: Only replace edges.}
    \label{graph_eig}
    \endminipage
\end{figure*}

We also tested the KL-divergence measured by performing an extremely long execution of SRW and MTO in Fig~\ref{fig:kl} - with each producing 20000 samples - to estimate the sampling probability for each node. The Geweke threshold was set to be 0.1 for the test. One can see that our MTO-Sampler not only requires fewer queries for generating each sample (i.e., converges to the stationary distribution faster), but also produces less bias than the SRW sampler.

To further test the bias of samples generated by our MTO-Sampler, we also conducted the test while varying the Geweke threshold from 0.1 to 0.8 on the dataset Slashdot B. Fig~\ref{fig:geweke_kl} depicts the change of measured bias for SRW and MTO, respectively. One can see from the figure that our MTO-Sampler achieves smaller bias than SRW for all cases being tested. In addition, a smaller threshold leads to a smaller bias and larger query cost, as indicated by the definition of Geweke convergence monitor.


\vspace{2mm}
\noindent{\bf Google Plus online social network:} For Google Plus, we do not have the ground truth as the entire social network is too large (about 85.2 million users in Feb 2012\footnote{Estimated by Paul Allen's model, http://goo.gl/nZCzN}) to be crawled. Thus, we performed the tests in two steps. First, we continuously ran each sampler until their Geweke convergence monitor indicated that it had reached its stationary distribution. We then used the final estimation as the presumptive ground truth which we refer to as the {\em converged value}. In the second step, we used the converged value to compute the relative error vs.~query cost tradeoff as previously described.

\begin{figure*}
    \begin{center}
        \subfigure[Estimated Average Degree]{
            \label{fig:gp_degree}
            \includegraphics[width=0.31\textwidth]{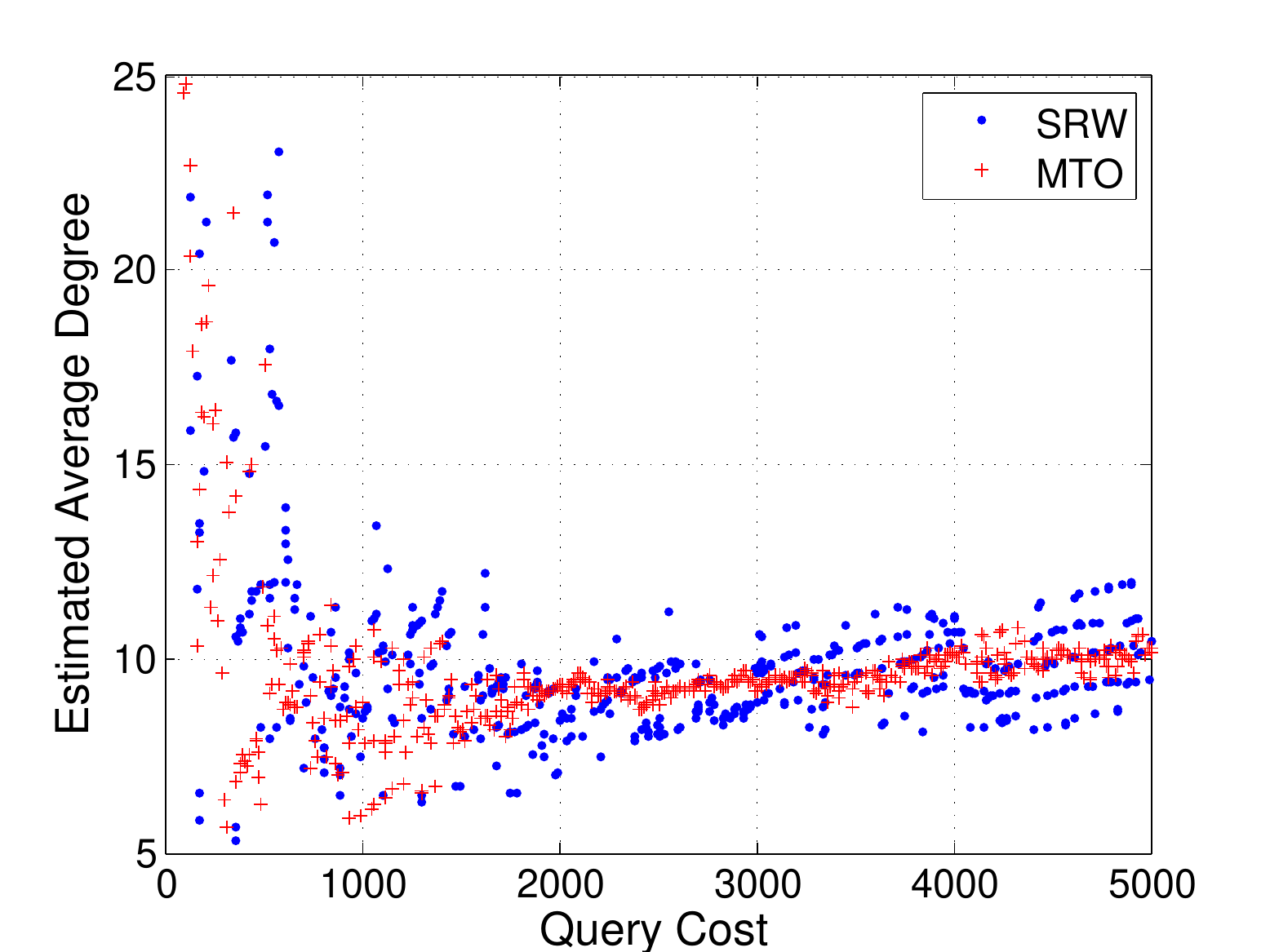}
        }
	    \subfigure[Average Degree]{
	        \label{fig:gp_error_deg}
	        \includegraphics[width=0.31\textwidth]{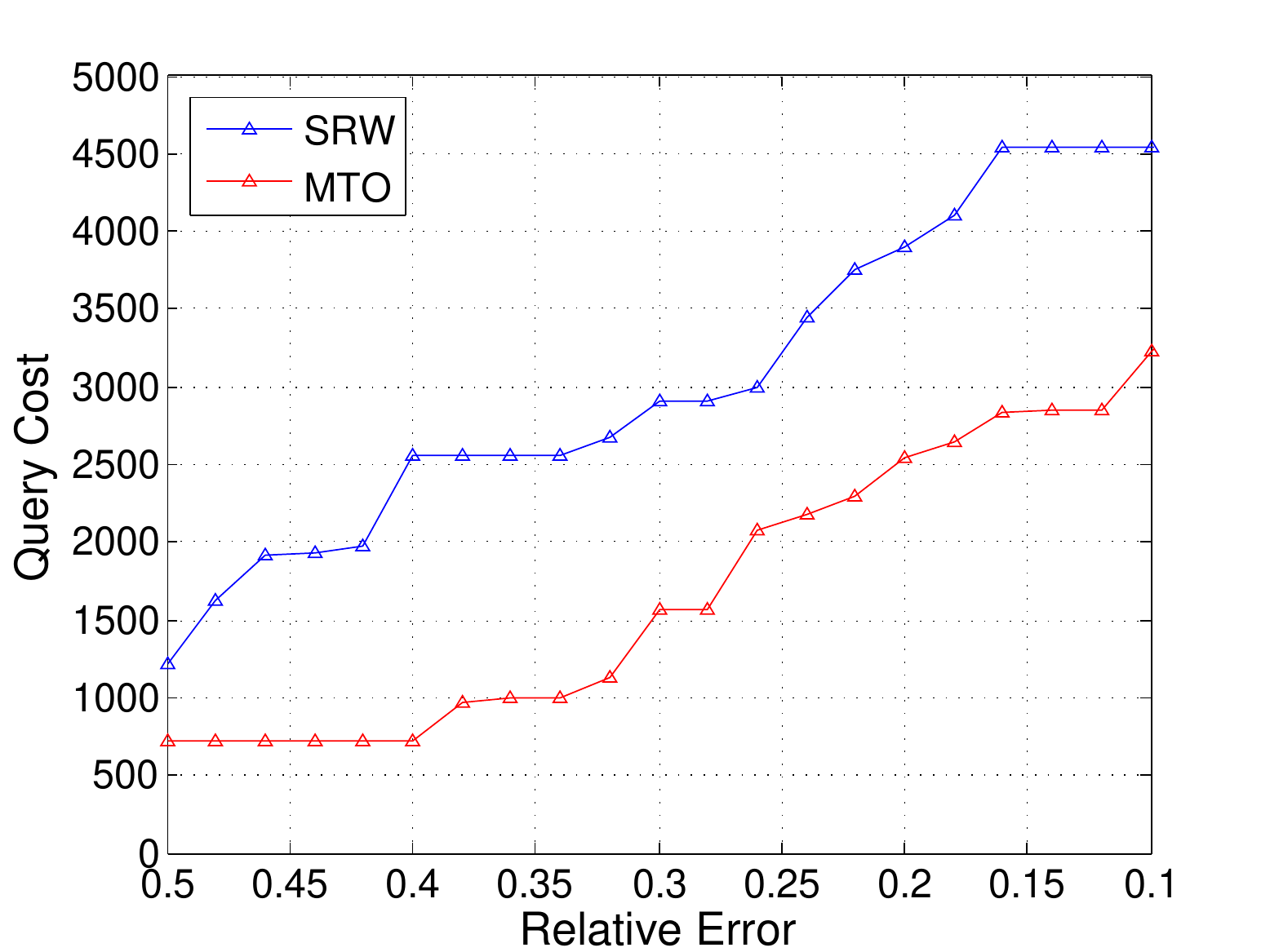}
        }
        \subfigure[Average Self-description Length]{
	        \label{fig:gp_introlen}
	        \includegraphics[width=0.31\textwidth]{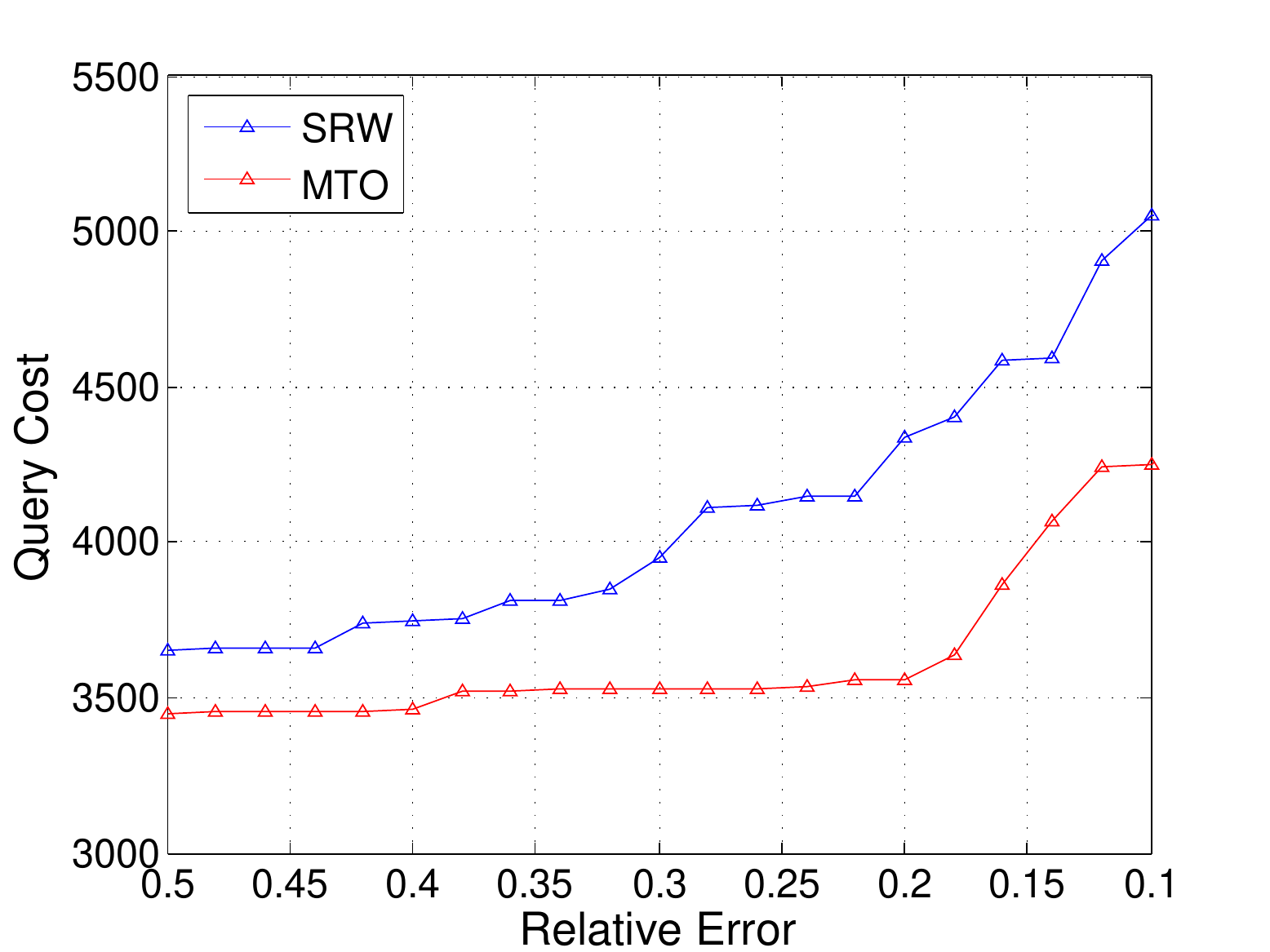}
        }
    \end{center}
    \caption{Google Plus online social network}    
    \label{fig:googleplus}
\end{figure*}

Fig~\ref{fig:gp_degree} shows the estimated average degree when running SRW and MTO-Sampler random walk on Google Plus. It clearly shows that MTO-Sampler's variance is smaller and converges faster than simpler random walk.  Fig~\ref{fig:gp_error_deg} and \ref{fig:gp_introlen} illustrate the comparison between SRW and MTO of the relative error vs query cost of multiple attributes. We note that the self-description length is the number of characters in users' self-description. One can see that our MTO-Sampler significantly outperforms SRW. 

\vspace{2mm}
\noindent{\bf Synthetic Social Networks:}
\label{sec:theoretical_exp}
Finally, we conducted further analysis of our MTO-Sampler, in particular the individual effects of edge removals (RM) and edge replacements (RP), using the synthetic latent space model described in Section~\ref{sec:epd}. Fig~\ref{graph_eig} depicts the results when the number of nodes in the graph varies from 50 to 100 (with the latent space model, we distributed these nodes in an area of $[0,4]\times [0,5]$, and set $r=0.7$). We derived the theoretical mixing time from the second largest eigenvalue modulus of the transition matrix. Note that Fig~\ref{graph_eig} also includes the theoretical bound derived in Section 4.2. One can see from the figure that our final MTO-Sampler achieves better efficiency than the individual applications of edge removal and replacement. In addition, the theoretical model represents a conservation estimation that is outperformed by the real efficiency of MTO-Sampler - consistent with our results in Section 4.2.


\section{Related Work}
\noindent\textbf{Sampling from online social networks.} Several papers \cite{Leskovec2006a, Airoldi:2005:SAP:1117454.1117457, Kurant} have considered sampling from general large graph, and \cite{Katzir2011, Mohaisena, Gjoka2010} focus on sampling from online social networks. 

With global topology, \cite{Leskovec2006a} discussed sampling techniques like random node, random edge, random subgraph in large graphs. \cite{Jin} introduced Albatross sampling which combines random jump and MHRW. \cite{Gjoka2010} also demonstrated true uniform sampling method among the users' id as ``ground-truth". 

Without global topology, \cite{Gjoka2010, Leskovec2006a}  compared sampling techniques such as Simple Random Walk, Metropolis-Hastings Random Walk and traditional Breadth First Search (BFS) and Depth First Search (DFS). Also \cite{Gjoka2010, Alon2008} considered many parallel random walks at the same time, and MTO-sampler can be applied to each parallel random walk straightforwardly, since it is an parameter-free and online algorithm. 

Moreover, to the best of our knowledge, random walk is still the most practical way to sampling from large graphs without global topology.

\noindent\textbf{Shorten the mixing time of random walks.} \cite{Mohaisena} found that the mixing time of typical online social networks is much larger than anticipated, which validates our motivation to shorten the mixing time of random walk. \cite{Boyd2004} derived the fastest mixing random walk on a graph by convex optimization on second largest eigenvalue of the transition matrix, but it need the whole topology of the graph,  and its high time complexity make it inapplicable in large graphs. 

\noindent\textbf{Theoretical models of online social network.} \cite{Sarkar:2005:DSN:1117454.1117459} compared latent space model with real social network data.
\cite{Chung04thesmall} introduced hybrid graph model to incorporate the small world phenomenon. \cite{Sala2010a} also measured the difference between multiple synthetic graphs and real world social network graphs.


\section{Conclusions}

In this paper we have initiated a study of enabling faster random walk over an online social network (with a restrictive web interface) by ``rewiring'' the social network on-the-fly. We showed that the key for speeding up a random walk is to increase the conductance of the graph topology followed by the random walk. As such, we developed MTO-Sampler which provably increases the graph conductance by constructing an overlay topology on-the-fly through edge removals and replacements. We provided theoretical analysis and extensive experimental studies over real-world social networks to illustrate the superiority of MTO-Sampler on achieving a smaller sampling bias while consuming a lower query cost.

\balance


\bibliographystyle{abbrv}
\bibliography{12_VLDB} 

\section{Appendix}

\begin{figure}
    \centering
    \includegraphics[height=1.6in]{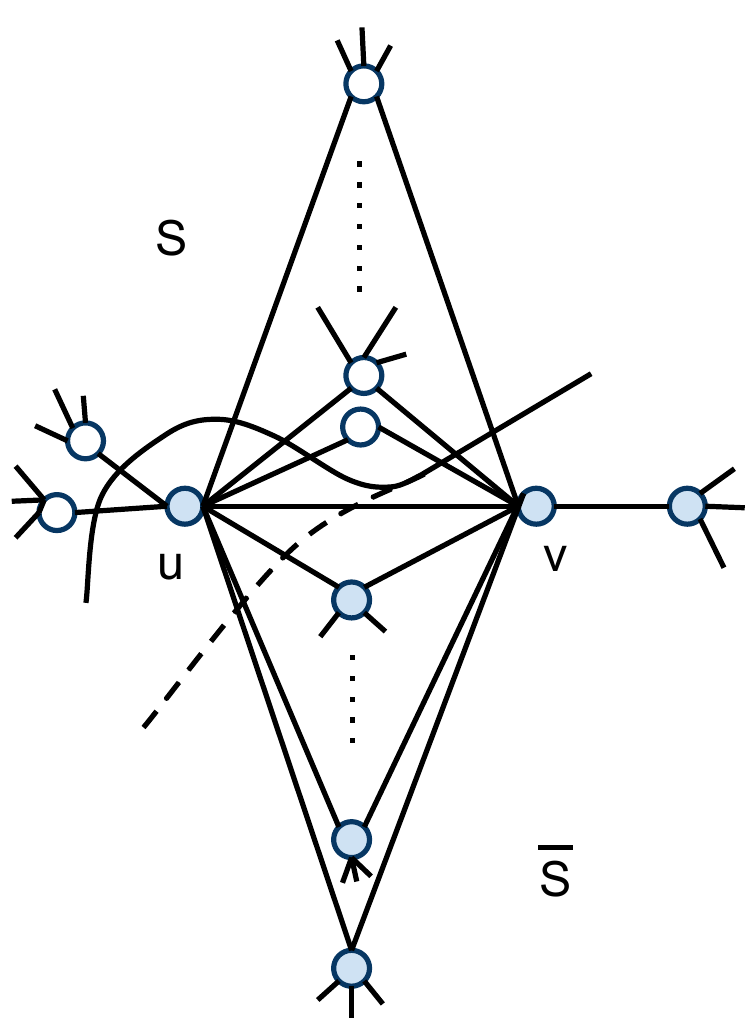}
    \caption{An counter example of lemma \ref{delete_propo2} that edge $e_{uv}$ must be a cross-cutting edge.}
    \label{delete_tightness}
\end{figure}

\vspace{5mm}{\bf Corollary 1.} {\it For all $N(u), N(v), k_u, k_v$ which satisfy 
\begin{equation}
\left\lceil \frac{|N(u)\cap N(v)|}{2} \right\rceil +1 \leq \frac{1}{2}\max\{k_u,k_v\},
\label{delete_equ2}
\end{equation} 
there always exists a graph $G(V, E)$ in which $e_{uv}$ is cross-cutting. }\vspace{5mm}

\begin{proof} Let $n=|N(u)\cap N(v)|$. We only need to construct a counter-example for each case that satisfies (\ref{delete_equ2}), but $e_{uv}$ is a cross-cutting edge. Assume we have a graph like Fig \ref{delete_tightness}, which shows the whole view of it. We let the number of common neighbors of node $u$ and $v$ be $n$. Assuming $k_u\geq k_v$, from (\ref{delete_equ2}) we get:
    \begin{equation}
    \left\{
    \begin{array}{ll}
        |O_u|=\max\{k_u,k_v\}-n-1\geq 1,&\text{if $n$ is even}\\
        |O_u|=\max\{k_u,k_v\}-n-1\geq 2,&\text{if $n$ is odd}
    \end{array}
    \right.  
    \end{equation}
    Here $O_u = \{e_{wu}|w\in N(u)-N(v)\cup\{v\}\}$, which denotes the outer edges of $u$ which is not linked to the node $v$ and their common neighbors. We can carefully construct a graph like Fig \ref{delete_tightness}: for each neighbor of node $u$ and $v$, it only has 1-degree neighbors. So we need to prove that after assigning the degree for each node, $e_{uv}$ will be a cross-cutting edge. If we simply let:
    \begin{equation}k_w\gg max\{k_u,k_v\} \,\,\,\forall w\in \{V-\{u,v\}\}\end{equation}
and then we divide these nodes into two sets $S$ and $\bar{S}$. 

Suppose $n$ is even. In order to achieve the minimum in the definition of conductance, there must exist the case such that we only need to decide whether node $u$ is in $S$ or in $\bar{S}$. 
    \begin{equation}
    \text{\# Cross-Cutting Edges}=\left\{
    \begin{array}{ll}
        n+1,&\text{if $u\in S$}\\
        n+|O_u|,&\text{if $u\in \bar{S}$}
    \end{array}
    \right.
    \end{equation}
If $|O_u|>1$, we can easily assert that $e_{uv}$ is a cross-cutting edge. If $|O_u|=1$, we can let $|a(S)|=|a(\bar{S})|$ when $u\in S$ to minimize $\min\{|a(S)|,|a(\bar{S})|\}$. So $e_{uv}$ is an cross-cutting edge under this circumstance.

Also, suppose $n$ is odd. Similarly, we have
    \begin{equation}
    \text{\# Cross-Cutting Edges}=\left\{
    \begin{array}{ll}
        n+1,&\text{if $u\in S$}\\
        2\lfloor\frac{n}{2}\rfloor+|O_u|,&\text{if $u\in \bar{S}$}
    \end{array}
    \right.
    \end{equation}
Since $|O_u|\geq 2$, in the same way we know that $e_{uv}$ is a cross-cutting edge.
\end{proof}

\begin{figure}
    \centering
    \includegraphics[width=1.6in]{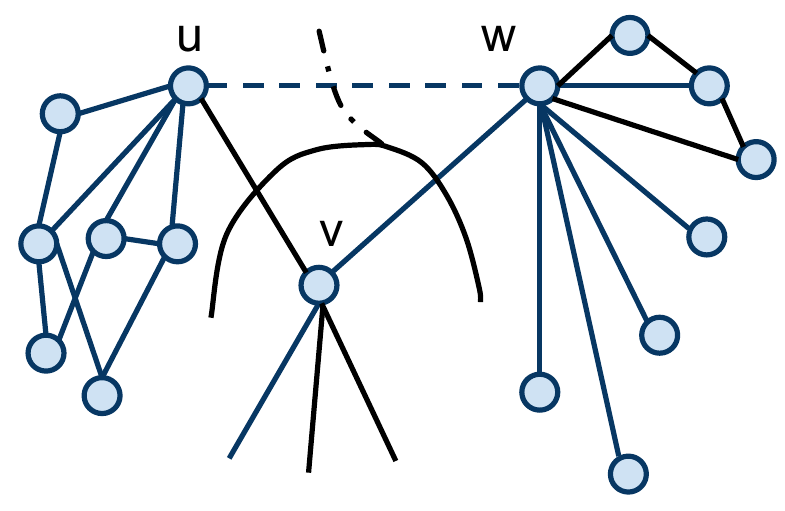}
    \caption{When replacing $e_{uv}$ with $e_{uw}$ from a node $v$ whose degree is larger than 3, it may decrease the conductance if both $e_{uv}$ and $e_{wv}$ are cross-cutting edges.}
    \label{replace_tightness}
\end{figure}

\vspace{5mm}{\bf Theorem 4.} {\it Given $G(V,E)$, $\forall v\in V$, if $k_v=3$, $u,w \in N(v)$, then replacing edge $e_{uv}$ with $e_{uw}$ will not decrease the conductance, while it also has positive possibility to increase the conductance. 
\label{replace_propo}}\vspace{5mm}
\begin{proof} 
First, no matter $e_{uv}$ is a cross-cutting edge or not, replace it with $e_{uw}$ should at least obtain the same conductance. If $e_{uv}$ is not a cross-cutting edge, then obviously we are not going to decrease the conductance because $a(S)$ or $a(\bar{S})$ will not change. If $e_{uv}$ is a cross-cutting edge, we only need to prove that $e_{uw}$ is also a cross-cutting edge. Let's assume $e_{uw}$ is not a cross-cutting edge, then we can infer that the $e_{vw}$ is a cross-cutting edge. But $v$ only has degree of 3, so it is obvious that letting $u$, $v$ and $w$ be the same side will achieve less conductance, which contradicts the definition of conductance. 

But if $e_{vw}$ is a cross-cutting edge, and we replace $e_{uv}$ with $e_{uw}$, then $e_{uw}$ has the positive probability to become one more cross-cutting edge in this local view of $u$, $v$ and $w$, which result in higher conductance.
\end{proof}

\vspace{5mm}{\bf Corollary 2.} {\it \label{replace_tight} For $v\in V$, if $k_v \neq 3$, then there always exist a graph 
$G(V,E)$, $\forall u,w \in N(v)$, such that replacing $e_{uv}$ with $e_{uw}$ will decrease the conductance or have no effect.}\vspace{5mm}
\begin{proof}
If $k_v=1$, then we could not cut it to disconnect the graph.
If $k_v=2$, we need to check some possible situations. If none of these edges linked to $v$ are cross-cutting edges, then replacing would not has effect on the conductance. If either $e_{uv}$ or $e_{wv}$ is a cross-cutting edge, then replace one of them with $e_{uw}$ will not generate another cross-cutting edge; because now $k_v=1$, and it should belongs to one side of the separation, $S$ or $\bar{S}$. 

So we only need to consider the situation when $k_v\geq 4$. See Fig \ref{replace_tightness}. There exist the case when both $e_{uv}$ and $e_{wv}$ are cross-cutting edges. Then replacing $e_{uv}$ with $e_{wv}$ would decrease the number of cross-cutting edges from 2 to 1 locally, which may lead to dramatic decrease of the conductance of the graph.

The uniqueness of $k_v=3$ is that there would not exist the case when both $e_{uv}$ and $e_{wv}$ are cross-cutting edges.
\end{proof}

\vspace{5mm}{\bf Theorem 5.} {\it Given $G(V, E)$, $\forall u, v \in V$, if $e_{uv}\in E$ and
\begin{equation}
\left\lceil\frac{|N(u)\cap N(v)|-N^*}{2}\right\rceil + 1 + \frac{1}{2}\sum_{w\in N^*}(4-k_w) > \frac{1}{2}\max\{k_u,k_v\},
\label{delete_equ3}
\end{equation}
we can assert that $e_{uv}$ is not a cross-cutting edge. Here we denote $N^* = \{w\in N(u)\cap N(v) | \,\,k_w\text{ is known }, 2\leq k_w\leq 3 \}$.
\label{delete_propo3}}\vspace{5mm}
\begin{proof}
Noticed that if we do not know any degree information about the common neighbors of $u$ and $v$, then $N^*=\emptyset$, and theorem \ref{delete_propo3} is exactly the same as theorem \ref{delete_propo1}. 

We are going to prove this theorem by contradiction, which means if we assume $e_{uv}$ is a cross-cutting edge, then we can find another configuration of $S$ and $\bar{S}$ such that $e_{uv}$ is not a cross-cutting edge but obtain less conductance. Again, let $n=|N(u)\cap N(v)|$, according to the assumption the number of common neighbors of $u$ and $v$ is n, then there must be $n+1$ cross-cutting edges in this local view of the graph, see Fig \ref{delete_extra}. 

Given a node $w\in N(u)\cap N(v)$, and according to some historical information we can achieve its degree $k_w$ without paying any query cost. So obviously, if $k_w\geq 4$ then it makes no sense to consider the rearrangement of it because dragging $w$ from $S$ to $\bar{S}$ would probably increase the number of cross-cutting edges without knowing the edge information outside this local view of the graph. Therefore, we only need to consider $N^*$, which is the set of all the nodes belongs to common neighbors of degree 2 and 3. 

if we denote that the number of cross-cutting edges linked to $u$ within $N^*\cup \{u\}$ is $n_u^{(i)}$, the number of cross-cutting edges linked to u outside $N^*\cup \{u\}$ is $n_u^{(o)}$, and similarly we have $n_v^{(i)}$ and $n_v^{(o)}$. So we have $n_u^{(i)}+n_u^{(o)}+n_v^{(i)}+n_v^{(o)}=n$. According to the condition described in Proposition \ref{delete_propo3}, either of the following inequality would hold:
\begin{align*}
n_u^{(o)}+n_v^{(i)} + 1 &> \frac{1}{2}\left(\max\{k_u,k_v\}\right) 
\\ &-\frac{1}{2}\left( 2|N^*|- \sum_{w\in N^*}(k_w-2)\right)\\
n_u^{(i)}+ n_v^{(o)} + 1 &> \frac{1}{2}\left(\max\{k_u,k_v\}\right) \\&-\frac{1}{2}\left( 2|N^*|- \sum_{w\in N^*}(k_w-2)\right)
\end{align*}
Without losing generality, assume the first one holds, then we are going to prove that by rearrange the set $N^*\cup\{u\}$ we can achieve a lower conductance and thus lead to the contradiction. 

Imagine that if we try to drag the whole set of $N^*\cup\{u\}$ from $S$ to $\bar{S}$, then we need to ``rearrange" all the edges linked to the set: those cross-cutting edges linked to the set but outside $N^*$ will be ``fliped", i.e. from cross-cutting edges to non-cross-cutting edges and vice versa; those cross-cutting edges linked to the set but inside $N^*$ will be eliminated, otherwise there will be two cross-cutting edges linked to the node in $N^*$, which is impossible because $\forall w\in N^*$, $2\leq k_w\leq 3$.

Let $O_{N^*\cup\{u\}} = \{e_{vw}\in E|v\in (N^*\cup\{u\}), w\notin (N^*\cup\{u\})\}$, then
\begin{equation}|O_{N^*\cup\{u\}}|=\max\{k_u,k_v\} + \sum_{w\in N^*}k_w - 2|N^*|.\end{equation}
And we know that the minimum number of cross-cutting edges we can manipulate will be at least $\left\lceil\frac{n-N^*}{2}\right\rceil+1+ N^*$. So as the result of one line calculation of (\ref{delete_equ3}), \begin{equation}\left\lceil\frac{n-N^*}{2}\right\rceil+1+ N^*>\frac{1}{2}|O_{N^*\cup\{u\}}|. \end{equation} Therefore moving the set $N^*\cup\{u\}$ from $S$ to $\bar{S}$ will always results in a lower conductance.
\end{proof}
\begin{figure}
    \centering
    \includegraphics[width=1.6in]{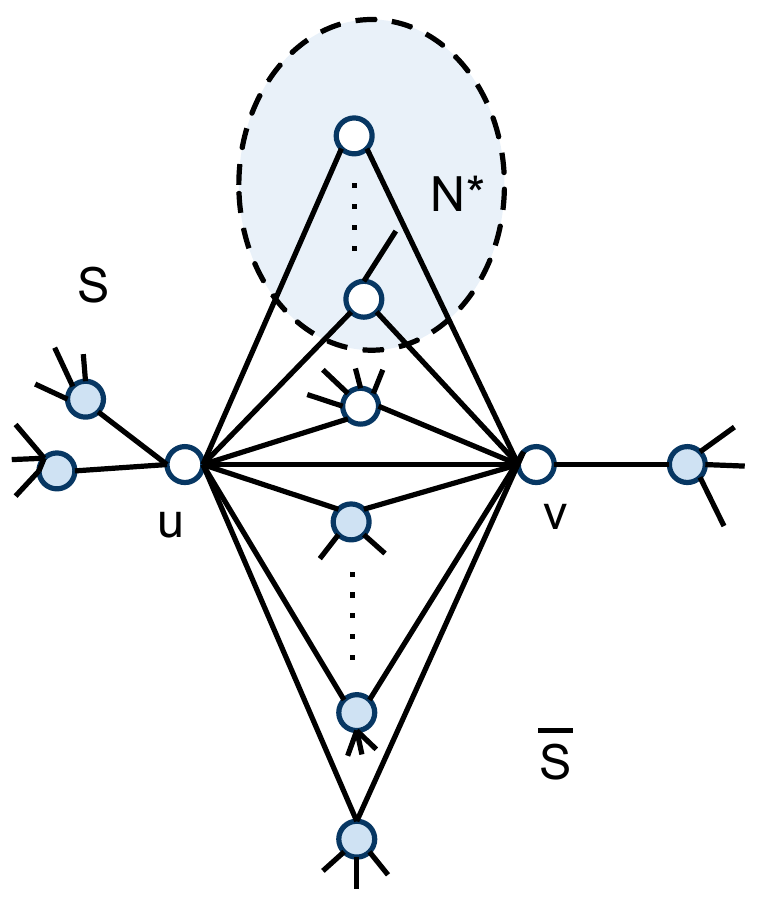}
    \caption{$N^*$ is the set we have accessed before and whose nodes are of degree 2 and 3. We do not know the blue nodes' degree.}
    \label{delete_extra}
\end{figure}

\vspace{5mm}{\bf Theorem 6.} {\it Given a latent space graph model G(V,E), assume $\alpha=+\infty$, then the expected number of edges we can removed
\begin{equation}
\mathbb{E}[R] \geq |E|\cdot\mathbb{P}\left(d<V(r)\left(1-\left(\frac{1}{3}\right)^{1/D}\right)\right)
\label{eq:E-R}
\end{equation}
Moreover, if we assume the dimension $D=2$, and nodes are uniformly distributed in a rectangle $[0,a]\times [0,b]$, then for the graph $G^*$ (after removing edges from G) is:
\begin{align}
\mathbb{E}[\Phi(G^*)] \geq \frac{\Phi(G)}{1-\iint_{z_1^2+z_2^2\leq 0.75r^2}f_a(z_1)f_b(z_2)dz_1z_2}
\end{align}
where $z_1$ and $z_2$ are independent uniform random variable supported on $[0,a]$ and $[0,b]$.}\vspace{5mm}

\begin{proof}
According to \cite{Sarkar_theoreticaljustification}, we have
\begin{equation}V(r)\left( 1-\frac{d_{ij}}{2r}\right)^D \leq \frac{|N(i)\cap N(j)|}{|N(i)\cup N(j)|}.\end{equation}
$V(r)$ is the volume of a $D$ dimensional hypersphere of radius r. 
Therefore, if we have small enough $d_{ij}$, than we can confirm that we can remove the edge $e_{ij}$. Conservatively, from theorem \ref{delete_propo1} we can reasonably assert that if $|N(i)\cap N(j)|\geq |N(i)\cup N(j)|-2$, then the edge $e_{ij}$ can be safely removed. So when \begin{equation}d_{ij}\leq 2r\left( 1- \left(\frac{1}{V(r)}(1-\frac{2}{|N(i)\cup N(j)|}) \right)^{1/D} \right)=d_0,\end{equation} the edge $e_{ij}$ can be removed. Now, we have transformed the probability of removing an edge to the probability of two node's distance is within a threshold. Since $|N(i)\cup N(j)|\geq 3$, so (\ref{eq:E-R}) holds. 

Given more assumptions of dimension and the distribution of nodes, the probability of two nodes' euclidean distance smaller than the threshold is:
\begin{equation}
\mathbb{P}(d\leq d_0) = \iint_{z_1^2+z_2^2\leq d_0^2}f_a(z_1)f_b(z_2)dz_1z_2.
\end{equation}
Also, since $|N(u)\cap N(v)| \geq 3$, the change of conductance can be calculated as 
\begin{align}
\mathbb{E}[\Phi(G^*)] &= \frac{|\sigma(S)|}{a(S)-\mathbb{P}(d\leq d_0)a(S)}\\ &= \frac{1}{1-\mathbb{P}(d\leq d_0)}\Phi(G)
\\ &\geq \frac{\Phi(G)}{1-\iint_{z_1^2+z_2^2\leq 0.75r^2}f_a(z_1)f_b(z_2)dz_1z_2}.
\end{align}
\end{proof}

\end{document}